\documentclass[sigconf]{acmart}
\usepackage{bbm}
\usepackage{bm}
\usepackage{epsfig,endnotes}
\usepackage{multirow}
\usepackage{subfig}
\usepackage{graphicx}
\usepackage{grffile}

\newcounter{subcopyrightbox@save}
\usepackage[font=bf]{caption}
\usepackage{color, url}
\usepackage{xspace} 

\usepackage{amsmath}

\usepackage{mathrsfs}
\usepackage{amssymb}
\usepackage{amsmath}
\usepackage{amsthm}
\usepackage{epstopdf}
\usepackage{balance}

\usepackage{thm-restate,thmtools}

\usepackage[ruled,linesnumbered,noend]{algorithm2e}

\newcommand{\myparatight}[1]{\smallskip\noindent{\bf {#1}:}~}

\usepackage{algorithmic}

\newcommand{\argmax}{\operatornamewithlimits{argmax}}
\newcommand{\argmin}{\operatornamewithlimits{argmin}}


\allowdisplaybreaks


\AtBeginDocument{%
  \providecommand\BibTeX{{%
    \normalfont B\kern-0.5em{\scshape i\kern-0.25em b}\kern-0.8em\TeX}}}

\setcopyright{acmcopyright}
\copyrightyear{2020}
\acmYear{2020}
\acmDOI{10.1145/xxxxxxx.xxxxxxx}




\setcopyright{acmcopyright}

\copyrightyear{2020}
\acmYear{2020}
\setcopyright{iw3c2w3}
\acmConference[WWW '20]{Proceedings of The Web Conference 2020}{April 20--24, 2020}{Taipei, Taiwan}
\acmBooktitle{Proceedings of The Web Conference 2020 (WWW '20), April 20--24, 2020, Taipei, Taiwan}
\acmPrice{}
\acmDOI{10.1145/3366423.3380029}
\acmISBN{978-1-4503-7023-3/20/04}

\settopmatter{printacmref=true}

\begin{document}


\title{Certified Robustness of Community Detection against Adversarial Structural Perturbation via Randomized Smoothing}

\author{Jinyuan Jia, Binghui Wang, Xiaoyu Cao, Neil Zhenqiang Gong}
\thanks{The first two authors made equal contributions.}
\affiliation{%
  \institution{Duke University}
}
\email{{jinyuan.jia,binghui.wang,xiaoyu.cao, neil.gong}@duke.edu}

\keywords{Community Detection; Certified Robustness}


\begin{abstract}
 Community detection plays a key role in understanding graph structure. However, several recent studies showed that  community detection is vulnerable to adversarial structural perturbation. In particular, via adding or removing a small number of carefully selected edges in a graph, an attacker can manipulate the detected communities. However, to the best of our knowledge, there are no studies on certifying robustness of community detection against such adversarial structural perturbation. In this work, we aim to bridge this gap. 
 Specifically,  we develop the first certified robustness guarantee of community detection against adversarial structural perturbation. 
 Given an arbitrary community detection method, we build a new \emph{smoothed} community detection method via randomly perturbing the graph structure. We theoretically show that the smoothed community detection method provably groups a given arbitrary set of nodes into the same community (or different communities) when the number of edges added/removed by an attacker is bounded. Moreover, we show that our certified robustness is \emph{tight}. We also empirically evaluate our method on multiple real-world graphs with ground truth communities. 
 
\end{abstract}
\maketitle

\section{Introduction}
Graph is a powerful tool to represent many complex systems. 
For example, an online social network can be viewed as a graph, where nodes are users and edges represent friendships or interactions between users. 
Community detection is a basic tool to understand the structure of a graph and has many applications. For instance, communities in a social graph may represent users with common interests, locations, occupations, etc.. 
Therefore, many community detection methods (e.g.,~\cite{blondel2008fast,girvan2002community,newman2006modularity,fortunato2007resolution,leskovec2009community,leskovec2008statistical,yang2015defining,leskovec2010empirical}) have been proposed by various fields such as network science, applied physics, and bioinformatics. Roughly speaking, a community detection method divides the nodes in a graph into groups such that nodes in the same group are densely connected and nodes in different groups are sparsely connected. 

However, multiple recent studies showed that  community detection  is vulnerable to adversarial structural perturbations~\cite{nagaraja2010impact,waniek2018hiding,fionda2017community,chen2019ga,chen2017practical}. 
Specifically, via adding or removing a small number of carefully selected edges in a graph, an attacker can manipulate the detected communities. 
For example, an attacker can spoof a community detection method to split a set of nodes, which are originally detected as in the same community, into different communities. An attacker can also spoof a community detection method to merge a set of nodes, which are originally detected as in different communities, into the same community. We call these two attacks \emph{splitting attack} and \emph{merging attack}, respectively. However, to the best of our knowledge, there are no studies to certify robustness of community detection against such adversarial structural perturbation. We note that several heuristic defenses~\cite{nagaraja2010impact,chen2017practical} were proposed to enhance the robustness of community detection against structural perturbation. However, these defenses lack formal guarantees and can often be defeated by strategic attacks that adapt to them. 

In this work, we aim to bridge this gap. In particular, we aim to develop certified robustness of community detection against structural perturbation. Given an arbitrary community detection method, our techniques transform the method to a robust community detection method that provably groups a given arbitrary set of nodes into the same community (against splitting attacks) or into different communities (against merging attacks) when the number of edges added/removed by the attacker is no larger than a threshold. We call the threshold \emph{certified perturbation size}.

Our robustness guarantees are based on a recently proposed technique called \emph{randomized smoothing}~\cite{lecuyer2018certified,li2018second,cohen2019certified}, which is the state-of-the-art method to build provably robust machine learning methods. Specifically, given an arbitrary function $f$, which takes $\mathbf{x}$ as an input and outputs a categorial value. Randomized smoothing constructs a smoothed function $g$ via adding random noise to the input $\mathbf{x}$. Moreover, the output of the smoothed function is the function $f$'s output that has the largest probability when adding random noise to the input $\mathbf{x}$. Suppose an attacker can add a perturbation to the input $\mathbf{x}$. The smoothed function provably has the same output once the perturbation added to the input $\mathbf{x}$ is bounded.

We propose to certify robustness of community detection using randomized smoothing. Specifically, given a graph, an arbitrary community detection method, and an arbitrarily set of nodes in the graph, we construct a function $f$, which takes the graph as an input and outputs 1 if the community detection method groups the set of nodes  into the same community, otherwise the function $f$ outputs 0. Then, we build a smoothed function $g$ via adding random noise to the graph structure, i.e., randomly adding or removing edges in the graph. Finally, we certify the robustness of the smoothed function $g$ against adversarial structural perturbation. 

However, existing randomized smoothing methods are insufficient to certify robustness of community detection. Specifically, they assume the input $\mathbf{x}$ is continuous and add Gaussian or Laplacian noise to it. 
However, graph structure is  {binary} data, i.e., a pair of nodes can be connected or unconnected; and Gaussian or Laplacian noise is not semantically meaningful for such binary data. To address the challenge, we develop randomized smoothing for binary data.  We theoretically derive a certified perturbation size via addressing several technical challenges. For instance, we prove a variant of the Neyman-Pearson Lemma~\cite{neyman1933ix} for binary data; and we divide the graph structure space into regions in a novel way such that we can apply the  variant of the Neyman-Pearson Lemma to certify robustness of community detection. Moreover, we prove that our  certified perturbation size is \emph{tight} if no assumptions on the community detection method are made. Our certified perturbation size is the solution to an optimization problem.  Therefore, we further design an algorithm to solve the optimization problem. 

We empirically evaluate our method using multiple real-world graph datasets with ground-truth communities including Email, DBLP, and Amazon datasets. We choose the efficient community detection method called Louvain's method~\cite{blondel2008fast}. We study the impact of various parameters on the certified robustness.

In summary, our key contributions are as follows: 

\begin{itemize}
\item We develop the first certified robustness of community detection against adversarial structural perturbation.  Moreover, we show that our certified robustness is tight. 
\item Our certified perturbation size is the solution to an optimization problem  and we develop an algorithm to solve the optimization problem. 
\item We evaluate our method on multiple real-world datasets.  
\end{itemize}

\section{Background}
\subsection{Community Detection}
Suppose we are given an undirected graph $G=(V,E)$, where $V$ is the set of nodes and $E$ is the set of edges. 
A community detection method divides the nodes in the graph into  groups, which are called communities. In non-overlapping community detection, a node only belongs to one community, while in overlapping community detection, a node may belong to multiple communities. 
Formally, a community detection algorithm  $\mathcal{A}$ takes a graph as an input and produces a set of communities $\mathcal{C}=\{C_1,C_2,\cdots,C_k\}$, where $V=\cup_{i=1}^{k}C_i$ and $C_i$ is the set of nodes that are in the $i$th community. For simplicity, we represent the graph structure as a binary vector $\mathbf{x}$, where an entry of the vector represents the connection status of the corresponding pair of nodes. Specifically, an entry $x_i=1$ if the corresponding pair of nodes are connected, otherwise $x_i=0$. Moreover, we denote by $n$ the length of the binary vector $\mathbf{x}$. Therefore, we can represent community detection  as $\mathcal{C}=\mathcal{A}(\mathbf{x})$.

\subsection{Attacks to Community Detection}

\myparatight{Adversarial structural perturbation} We consider an attacker can manipulate the graph structure, i.e., adding or removing some edges in the graph. In particular,  an attacker may have control of some nodes in the graph and can add or remove edges among them.  For instance, in a social graph, the attacker-controlled nodes may be fake users created by the attacker or normal users compromised by the attacker. We denote by a binary vector $\bm{\delta}$ the attacker's perturbation to the graph, where $\delta_i=1$ if and only if the attacker changes the connection status of the corresponding pair of nodes. $\mathbf{x}\oplus \bm{\delta}$ is the perturbed graph structure, where the operator $\oplus$ is the XOR between two binary variables. Moreover, we use $||\bm{\delta}||_0$ to measure the magnitude of the perturbation because $||\bm{\delta}||_0$ has semantic interpretations. In particular,  $||\bm{\delta}||_0$ is the number of edges added or removed by the attacker.

\myparatight{Two attacks}  An attacker can manipulate the detected communities via adversarial structural perturbation~\cite{nagaraja2010impact,waniek2018hiding,fionda2017community,chen2019ga,chen2017practical}. Specifically, there are two types of attacks  to community detection: 
\begin{itemize}
\item {\bf Splitting attack.} Given a set of nodes (called \emph{victim nodes}) $\Gamma=\{u_1,u_2,\cdots,u_c\}$ that are in the same community. A splitting attack aims to perturb the graph structure such that a community detection method divides the nodes in $\Gamma$  into different communities.  Formally, we have communities $\mathcal{C}'=\{C_1,C_2,\cdots,C_{k'}\}=\mathcal{A}(\mathbf{x}\oplus \bm{\delta}')$ after the attacker adds perturbation $\bm{\delta}'$ to the graph structure, but there does not exist a community $C_i$ such that $\Gamma\subset C_i$.

\item {\bf Merging attack.} Given a set of victim nodes $\Gamma$ that are in different communities. A merging attack aims to perturb the graph structure such that a community detection method groups the nodes in $\Gamma$  into the same community.  Formally, we have communities $\mathcal{C}''=\{C_1,C_2,\cdots,C_{k''}\}=\mathcal{A}(\mathbf{x}\oplus \bm{\delta}'')$ after the attacker adds perturbation $\bm{\delta}''$ to the graph structure, and  there exists a community $C_i$ such that $\Gamma\subset C_i$. 

\end{itemize}

We aim to develop certified robustness of community detection against the splitting and merging attacks.

\subsection{Randomized Smoothing}
Randomized smoothing is state-of-the-art method to build provably secure machine learning methods~\cite{cao2017mitigating,cohen2019certified}.  Suppose we are given a function $f$, which takes $\hat{\mathbf{x}}$ as an input and outputs  a categorical value in a domain $\{1,2,\cdots,d\}$. Randomized smoothing aims to construct a smoothed function $g$ via adding random noise $\hat{\bm{\epsilon}}$ to the input $\hat{\mathbf{x}}$. Moreover, the output of the smoothed function is the output of the function $f$ that has the largest probability when adding random noise to the input $\hat{\mathbf{x}}$. Formally, we have:
\begin{align}
g(\hat{\mathbf{x}})=\argmax_{\hat{y}\in \{1,2,\cdots,d\}}\text{Pr}(f(\hat{\mathbf{x}}+\hat{\bm{\epsilon}})=\hat{y}),
\end{align}
where $\hat{\bm{\epsilon}}$ is random noise drawn from a certain distribution. Suppose an attacker can add perturbation $\hat{\bm{\delta}}$ to the input $\hat{\mathbf{x}}$. Existing studies~\cite{lecuyer2018certified,li2018second,cohen2019certified} assume $\hat{\mathbf{x}}$ is continuous data. Moreover, they showed that, when the random noise is drawn from a Gaussian distribution or Laplacian distribution, the smoothed function provably has the same output when the $L_2$-norm or $L_1$-norm of the perturbation   $\hat{\bm{\delta}}$ is bounded. However, in our problem, the graph structure is binary data. Gaussian or Laplacian noise is not semantically meaningful for such binary data. To address the challenge, we will develop randomized smoothing for binary data and apply it to  certify robustness against the splitting and merging attacks.

\section{Certified Robustness}
\subsection{Randomized Smoothing on Binary Data} 
We first construct a function $f$ to model the splitting and merging attacks. Specifically, given a graph whose structure we represent as a binary vector $\mathbf{x}$, a community detection algorithm $\mathcal{A}$, and a set of victim nodes denoted as $\Gamma$, the function $f$ outputs 1 if the nodes in $\Gamma$ are grouped into the same community detected by $\mathcal{A}$ and outputs 0 otherwise. Formally, we define $f$ as follows:
\begin{align}
\label{definition_of_classifier_f}
f(\mathbf{x}) = \begin{cases}
    &1,  \text{ if }\exists i, \Gamma\subset C_i, { where }\ C_i \in \mathcal{A}(\mathbf{x}) \\ 
    & 0, \text{ otherwise.}
    \end{cases}
\end{align}
We simply use $\mathbf{x}$ as an input for the function $f$ because we study structural perturbation and other parameters--such as the community detection algorithm $\mathcal{A}$ and the set of victim nodes $\Gamma$--can be assumed to be constants.  
An attacker adds a perturbation vector $\bm{\delta}$ to the graph structure, i.e., $\mathbf{x} \oplus \bm{\delta}$ is the perturbed structure.  
When the nodes in $\Gamma$ are in the same community before attack (i.e., $f(\mathbf{x})=1$) and $f(\mathbf{x}\oplus \bm{\delta})$ produces 0, a splitting attack succeeds. When the nodes in $\Gamma$ are in different communities before attack (i.e., $f(\mathbf{x})=0$) and $f(\mathbf{x}\oplus \bm{\delta})$ produces 1, a merging attack succeeds.

We construct a smoothed function $g$ via adding random noise to the graph structure $\mathbf{x}$. Specifically,  
we define a  noise distribution in the discrete space $\{0,1\}^{n}$ as follows: 
\begin{align}
\label{define_of_epsilon_variable}
    \text{Pr}(\epsilon_i=0)=\beta, \text{Pr}(\epsilon_i=1)= 1 -\beta, \forall i \in \{1,2,\cdots,n\},
\end{align}
where $n$ is the length of $\mathbf{x}$ and $\epsilon_i$ is the random binary noise added to the $i$th entry of $\mathbf{x}$. Formally, $\mathbf{x}\oplus \bm\epsilon$ is the noisy graph structure. Our random noise means that the connection status (connected or unconnected) of a pair of nodes is preserved with a probability $\beta$ and changed with a probability $1-\beta$.

We note that the detected communities $\mathcal{C}=\mathcal{A}(\mathbf{x}\oplus {\bm\epsilon})$ are random since ${\bm\epsilon}$ is random. Therefore, the output $f(\mathbf{x}\oplus \bm{\epsilon})$ is also random. The smoothed function $g$ outputs the value that has a larger probability. Formally, we have:
\begin{align}
\label{define_of_smoothed_classifier}
    g(\mathbf{x})&=\argmax_{y\in \{0,1\}} \text{Pr}(f(\mathbf{x}\oplus\bm{\epsilon})=y)\nonumber \\
    &=
    \begin{cases}
        & 1, \text{ if }\text{Pr}(f(\mathbf{x}\oplus\bm{\epsilon})=1)> 0.5 \\
        & 0, \text{ otherwise.}
    \end{cases}
\end{align}

Certifying robustness against a splitting attack is to certify that $g(\mathbf{x}\oplus\bm{\delta})=1$ for all $||\bm{\delta}||_0\leq L_1$, while certifying robustness against a merging attack is to certify that $g(\mathbf{x}\oplus\bm{\delta})=0$ for all $||\bm{\delta}||_0\leq L_2$. In other words, we aim to certify that $g(\mathbf{x}\oplus\bm{\delta})=y$ for all $||\bm{\delta}||_0\leq L$, where $y\in \{0, 1\}$ and $L$ is called \emph{certified perturbation size}.

\subsection{Deriving Certified Perturbation Size}
In this section, we derive the certified perturbation size of the smoothed function $g$ theoretically for a given graph, community detection algorithm, and a set of victim nodes. In the next section, we will design algorithms to compute the certified perturbation size in practice. Our results can be summarized in the following two theorems.

\begin{restatable}[Certified Perturbation Size]{thm}{certyfiedperturbationsize}
\label{CertifiedPerturbationSize}
Given a graph-structure binary vector $\mathbf{x}$, a community detection algorithm $\mathcal{A}$, and a set of victim nodes $\Gamma$.  
The function $f$, random noise $\bm{\epsilon}$, and smoothed function $g$ are defined in  Equation~\ref{definition_of_classifier_f},~\ref{define_of_epsilon_variable}, and~\ref{define_of_smoothed_classifier}, respectively.  Assume there exists $\underline{p}\in [0,1]$ such that: 
\begin{align}
\label{main_theorem_condition_label}
\text{Pr}(f(\mathbf{x} \oplus \bm{\epsilon})=y)\geq \underline{p} > 0.5,
\end{align}
where $\underline{p}$ is a lower bound of the probability $p=\text{Pr}(f(\mathbf{x} \oplus \bm{\epsilon})=y)$  that $f$ outputs $y$ under the random noise $\bm{\epsilon}$. 
Then, we have: 
\begin{align}
g(\mathbf{x} \oplus \bm{\delta})=y, \forall   ||\bm{\delta}||_0 \leq L, 
\end{align}
where $L$ is called certified perturbation size and is the solution to the following optimization problem: 
\begin{align}
\label{problemK}
&L= \argmax l,   \\
& \textrm{s.t. } ||\bm{\delta}||_0 = l, \\
&\quad \sum_{i=1}^{\mu-1} \text{Pr}(\mathbf{x} \oplus \bm{\delta} \oplus \bm{\epsilon} \in \mathcal{H}_i) \nonumber  \\ 
\label{inequalityconstraint}
& \quad + (\underline{p} - \sum_{i=1}^{\mu-1} \text{Pr}(\mathbf{x} \oplus \bm{\epsilon} \in \mathcal{H}_i)) \cdot \frac{\text{Pr}(\mathbf{x} \oplus \bm{\delta} \oplus \bm{\epsilon} \in \mathcal{H}_{\mu})}{\text{Pr}(\mathbf{x} \oplus \bm{\epsilon} \in \mathcal{H}_{\mu})} 
 >0.5, 
\end{align}
where we define region $\mathcal{H}(e) = \{\mathbf{z}\in \{0,1\}^n: \frac{\text{Pr}(\mathbf{x} \oplus \bm{\epsilon} = \mathbf{z})}{\text{Pr}(\mathbf{x} \oplus \bm{\delta} \oplus \bm{\epsilon} = \mathbf{z})} = \big(\frac{\beta}{1-\beta}\big)^{e}\}$ and density ratio $h(e)=\big(\frac{\beta}{1-\beta}\big)^{e}$, where $e=-n, -n+1, \cdots, n-1, n$. We rank the regions $\mathcal{H}(-n), \mathcal{H}(-n+1), \cdots, \mathcal{H}(n)$ in a descending order with respect to the density ratios $h(-n), h(-n+1), \cdots, h(n)$. Moreover, we denote the ranked regions as $\mathcal{H}_1, \mathcal{H}_2, \cdots, \\ \mathcal{H}_{2n+1}$. Furthermore, $\mu$ is defined as follows:
$$\mu = \argmin_{\mu^{\prime} \in \{1, 2, \cdots, 2n+1\}} \mu^{\prime}, \ s.t. \ \sum_{i=1}^{\mu^{\prime}} \textrm{Pr}(\mathbf{x} \oplus \bm{\epsilon} \in \mathcal{H}_i) \geq \underline{p}$$
\end{restatable}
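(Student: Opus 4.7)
The plan is to reduce the claim to a worst-case optimization over all binary classifiers consistent with the lower bound $\underline{p}$, and then to identify the worst case via a Neyman--Pearson-style argument adapted to the Bernoulli noise model. By definition of $g$, the conclusion $g(\mathbf{x}\oplus\bm{\delta})=y$ is equivalent to $\text{Pr}(f(\mathbf{x}\oplus\bm{\delta}\oplus\bm{\epsilon})=y)>0.5$, so I first recast the problem as: given that $\text{Pr}(f(\mathbf{x}\oplus\bm{\epsilon})=y)\geq \underline{p}$, obtain the tightest possible lower bound on $\text{Pr}(f(\mathbf{x}\oplus\bm{\delta}\oplus\bm{\epsilon})=y)$ for $\|\bm{\delta}\|_0=l$. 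Since the community detection routine $\mathcal{A}$ is treated as a black box, this lower bound must hold uniformly over all $\{0,1\}$-valued functions satisfying the constraint, which I formalize as the minimization $\min_{\tilde f} \text{Pr}(\tilde f(\mathbf{x}\oplus\bm{\delta}\oplus\bm{\epsilon})=y)$ subject to $\text{Pr}(\tilde f(\mathbf{x}\oplus\bm{\epsilon})=y)\geq \underline{p}$.

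Next, I would prove (or invoke from the earlier-promised variant in the introduction) a Neyman--Pearson lemma for discrete distributions: the optimal $S=\{\mathbf{z}:\tilde f(\mathbf{z})=y\}$ is a level set of the likelihood ratio $r(\mathbf{z})=\text{Pr}(\mathbf{x}\oplus\bm{\epsilon}=\mathbf{z})/\text{Pr}(\mathbf{x}\oplus\bm{\delta}\oplus\bm{\epsilon}=\mathbf{z})$. Concretely, among all $S$ satisfying the mass constraint, the one minimizing the target probability fills the regions with the \emph{largest} ratio first, because such regions contribute the most to the constraint per unit of target mass. For the Bernoulli noise defined in Equation~\ref{define_of_epsilon_variable}, a direct coordinate-by-coordinate calculation shows that $r(\mathbf{z})$ depends only on how many of the $\|\bm{\delta}\|_0$ flipped coordinates agree or disagree between $\mathbf{z}$ and $\mathbf{x}$, and therefore takes only the values $(\beta/(1-\beta))^e$ for integer $e$ in the stated range. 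This identifies the level sets $\mathcal{H}(e)$ exactly and yields the ranked decomposition $\mathcal{H}_1,\dots,\mathcal{H}_{2n+1}$.

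I then assemble the worst-case $S^\star$ greedily: take $\mathcal{H}_1\cup\dots\cup\mathcal{H}_{\mu-1}$ entirely, where $\mu$ is the smallest index such that the cumulative mass under $\mathbf{x}\oplus\bm{\epsilon}$ first reaches $\underline{p}$, and then add a fraction of $\mathcal{H}_\mu$ chosen to make the total mass under $\mathbf{x}\oplus\bm{\epsilon}$ equal exactly $\underline{p}$. Because $r$ is constant on $\mathcal{H}_\mu$, taking that fraction uniformly is equivalent to scaling the mass of $\mathcal{H}_\mu$ under $\mathbf{x}\oplus\bm{\delta}\oplus\bm{\epsilon}$ by the ratio $(\underline{p}-\sum_{i<\mu}\text{Pr}(\mathbf{x}\oplus\bm{\epsilon}\in\mathcal{H}_i))/\text{Pr}(\mathbf{x}\oplus\bm{\epsilon}\in\mathcal{H}_\mu)$. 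Summing contributions gives exactly the left-hand side of Equation~\ref{inequalityconstraint}. Thus whenever this expression exceeds $0.5$ for every $\bm{\delta}$ with $\|\bm{\delta}\|_0=l$, the real $f$ also satisfies $\text{Pr}(f(\mathbf{x}\oplus\bm{\delta}\oplus\bm{\epsilon})=y)>0.5$, proving $g(\mathbf{x}\oplus\bm{\delta})=y$. Taking the largest such $l$ yields the certified perturbation size $L$. Symmetry of the product-Bernoulli noise implies that the quantities in Equation~\ref{inequalityconstraint} depend on $\bm{\delta}$ only through $\|\bm{\delta}\|_0$, so the universal-over-$\bm{\delta}$ statement reduces to a scalar condition in $l$.

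The main obstacle is the Neyman--Pearson step in the discrete regime: classical Neyman--Pearson assumes continuous distributions and an exact equality constraint, which lets one carve the optimal rejection region with arbitrary precision. Here, the distributions are atomic and the constraint is an inequality, so one must (i) argue that the worst case is achieved at equality and (ii) handle the boundary level set $\mathcal{H}_\mu$ by a randomized/fractional assignment and then show that this fractional optimum is actually attained, up to equality, by an extreme point of the feasible polytope of $\{0,1\}$-valued $\tilde f$. Executing this carefully---and showing that the fractional filling on $\mathcal{H}_\mu$ is not merely an upper bound but the true infimum achievable by some $\tilde f$---is what underpins both validity and the tightness claim mentioned in the paper.
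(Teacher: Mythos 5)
Your proposal is correct and follows essentially the same route as the paper: the paper's Appendix proves exactly the discrete Neyman--Pearson variant you describe (Lemma~\ref{neyman-pearson-variant}) and then builds the greedy region $\mathcal{Q}=\bigcup_{i=1}^{\mu-1}\mathcal{H}_i\cup\underline{\mathcal{H}_\mu}$ with $\text{Pr}(\mathbf{x}\oplus\bm{\epsilon}\in\mathcal{Q})=\underline{p}$, yielding the same lower bound $\text{Pr}(f(\mathbf{x}\oplus\bm{\delta}\oplus\bm{\epsilon})=y)\geq\text{Pr}(\mathbf{x}\oplus\bm{\delta}\oplus\bm{\epsilon}\in\mathcal{Q})$ that you obtain by phrasing it as a worst-case minimization over consistent classifiers. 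The boundary-level-set subtlety you flag (an exact-mass subregion of $\mathcal{H}_\mu$ may not exist in an atomic space) is present in the paper's argument as well, which simply takes $\underline{\mathcal{H}_\mu}$ to be ``any subregion'' of the prescribed mass.
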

\begin{proof}
    See Appendix~\ref{proof_of_theorem_certifiedperturbationsize}. 
\end{proof}

Next, we show that our certified perturbation size is tight.  

\begin{restatable}[Tightness of the Certified Perturbation Size]{thm}{tightnessofcertifiedperturbationsize}
\label{tighnessbound}
For any perturbation $\bm{\delta}$ with $||\bm{\delta}||_0 > L$, there exists a community detection algorithm $\mathcal{A}^{*}$ (and thus a function $f^*$) consistent with Equation~\ref{main_theorem_condition_label} such that $g(\mathbf{x} \oplus \bm{\delta})\neq y$ or there exists ties.  
\end{restatable}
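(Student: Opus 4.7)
The plan is to show tightness by explicitly constructing a worst-case community detection algorithm $\mathcal{A}^*$ (and its induced binary function $f^*$) that realizes the extremal value driving Theorem~\ref{CertifiedPerturbationSize}. That theorem is proved by a Neyman--Pearson style argument that lower-bounds the adversarial quantity $\text{Pr}(f(\mathbf{x}\oplus\bm{\delta}\oplus\bm{\epsilon}) = y)$ over all $f$ consistent with the lower bound $\underline{p}$; tightness says this bound is actually attained by some community detection algorithm, so the certificate cannot be improved without additional assumptions on $\mathcal{A}$.

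Concretely, fix any $\bm{\delta}$ with $l := ||\bm{\delta}||_0 > L$, and consider the ranked likelihood-ratio regions $\mathcal{H}_1,\ldots,\mathcal{H}_{2n+1}$ from Theorem~\ref{CertifiedPerturbationSize}. I would set $f^*(\mathbf{z}) = y$ on $\mathcal{H}_1 \cup \cdots \cup \mathcal{H}_{\mu-1}$, extend $f^* = y$ to a subset $S^* \subseteq \mathcal{H}_\mu$ whose mass under $\mathbf{x}\oplus\bm{\epsilon}$ equals the residual $\underline{p} - \sum_{i=1}^{\mu-1}\text{Pr}(\mathbf{x}\oplus\bm{\epsilon}\in\mathcal{H}_i)$, and set $f^* = 1-y$ elsewhere. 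This gives $\text{Pr}(f^*(\mathbf{x}\oplus\bm{\epsilon}) = y) = \underline{p}$, so hypothesis~\eqref{main_theorem_condition_label} holds. Because the density ratio $\text{Pr}(\mathbf{x}\oplus\bm{\epsilon}=\mathbf{z})/\text{Pr}(\mathbf{x}\oplus\bm{\delta}\oplus\bm{\epsilon}=\mathbf{z})$ is constant on every $\mathcal{H}_i$, the mass of $S^*$ under $\mathbf{x}\oplus\bm{\delta}\oplus\bm{\epsilon}$ is recovered by the constant scaling factor $\text{Pr}(\mathbf{x}\oplus\bm{\delta}\oplus\bm{\epsilon}\in\mathcal{H}_\mu)/\text{Pr}(\mathbf{x}\oplus\bm{\epsilon}\in\mathcal{H}_\mu)$, and summing over $\mathcal{H}_1,\ldots,\mathcal{H}_{\mu-1}$ and $S^*$ reproduces exactly the left-hand side of~\eqref{inequalityconstraint}. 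Since $l > L$, this quantity is at most $0.5$ by the maximality of $L$, so $g(\mathbf{x}\oplus\bm{\delta}) \neq y$ when the inequality is strict and the argmax in~\eqref{define_of_smoothed_classifier} has a tie when equality holds.

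To realize $f^*$ as a community detection algorithm, I would define $\mathcal{A}^*(\mathbf{z})$ to return a partition of $V$ that places every node of $\Gamma$ in a single community whenever $f^*(\mathbf{z}) = 1$ and to return the all-singletons partition (or any partition separating $\Gamma$) whenever $f^*(\mathbf{z}) = 0$. Both partitions are legal, so $\mathcal{A}^*$ is a well-defined (though artificial) community detection algorithm, and the $f$ it induces through~\eqref{definition_of_classifier_f} coincides with $f^*$, closing the loop with the smoothed function $g$ in~\eqref{define_of_smoothed_classifier}.

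The main delicate step will be the discreteness of $\{0,1\}^n$: in general no subset $S^* \subseteq \mathcal{H}_\mu$ attains the real-valued target mass exactly, since points in $\mathcal{H}_\mu$ may have different individual probabilities (only the ratio is constant). I would handle this either by augmenting $\mathcal{A}^*$ with an independent auxiliary coin flip in the spirit of Neyman--Pearson randomization, which lets the adversarial algorithm fractionally include one boundary point and match $\underline{p}$ exactly, or by taking the smallest subset of $\mathcal{H}_\mu$ whose mass is at least the residual and letting the explicit \emph{ties} escape clause in the theorem absorb the resulting boundary case.
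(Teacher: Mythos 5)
Your construction is exactly the paper's proof: the paper defines the region $\mathcal{Q}=\bigcup_{i=1}^{\mu-1}\mathcal{H}_i\cup\underline{\mathcal{H}_\mu}$ from the proof of Theorem~\ref{CertifiedPerturbationSize}, sets $f^*=y$ on $\mathcal{Q}$ and $1-y$ on its complement, and concludes from the maximality of $L$ that $\text{Pr}(\mathbf{x}\oplus\bm{\delta}\oplus\bm{\epsilon}\in\mathcal{Q})\leq 0.5$, yielding $g(\mathbf{x}\oplus\bm{\delta})\neq y$ or a tie. Your closing remark about the discreteness of $\{0,1\}^n$ correctly identifies a subtlety the paper silently assumes away (it takes for granted that a subregion of $\mathcal{H}_\mu$ with the exact residual mass exists, even though points of $\mathcal{H}_\mu$ need not have equal individual probabilities), and your proposed randomized or boundary-absorbing fix is a sound way to close it.
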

\begin{proof}
    See Appendix~\ref{proof_of_tight_of_certified_perturbation_size}. 
\end{proof}
We have the following observations from our two theorems: 
\begin{itemize}
\item Our certified perturbation size can be applied to any community detection method.  
\item Our certified perturbation size depends on $\underline{p}$ and $\beta$. When the probability lower bound $\underline{p}$ is tighter, our certified perturbation size is larger. We use the probability lower bound $\underline{p}$ instead of its exact value $p$ because it is challenging to compute the exact value.
\item When using the noise distribution defined in Equation~\ref{define_of_epsilon_variable} and  no further assumptions are made on the community detection algorithm, it is impossible to certify a perturbation size that is larger than $L$. 
\end{itemize}

\subsection{Computing Certified Perturbation Size}
Given a graph-structure binary vector $\mathbf{x}$, a community detection algorithm $\mathcal{A}$, and a set of victim nodes $\Gamma$, we aim to compute the certified perturbation size in practice. We face two challenges. 
The first challenge is to estimate $y$ and obtain the probability lower bound $\underline{p}$. The second challenge is how to solve the optimization problem  in Equation~\ref{problemK}. To address the first challenge, we first estimate a value of $y$, and then use the one-sided Clopper-Pearson method~\cite{brown2001interval} to estimate the probability bound with probabilistic guarantees. To address the second challenge, we 
develop an efficient algorithm to solve the optimization problem. 

\myparatight{Estimating $y$ and $\underline{p}$} 
We leverage a Monte-Carlo method to estimate $y$ and $\underline{p}$ with probabilistic guarantees. Specifically, we first randomly sample $N$ noise, and we use $\bm{\epsilon}_1,\bm{\epsilon}_2,\cdots,\bm{\epsilon}_N$ to denote them. Then, we  compute the frequency of the output 0 and 1 for the function $f$, i.e., $m_{0}=\sum_{i=1}^{N}\mathbb{I}(f(\mathbf{x}\oplus\bm{\epsilon}_i)=0)$ and $m_{1}=\sum_{i=1}^{N}\mathbb{I}(f(\mathbf{x}\oplus\bm{\epsilon}_i)=1)$, where $\mathbb{I}$ is an indicator function. We estimate $\hat{y}=\argmax_{i\in\{0,1\}}m_{i}$. Then, we estimate $\underline{p}$ by leveraging the one-sided Clopper-Pearson method. 
Estimating $\underline{p}$  can be viewed as estimating the parameter of a Binomial distribution. In particular,  $m_{\hat{y}}$ can be viewed as a sample from a Binomial distribution  $Bin(N,p)$, where $m_{\hat{y}}$ is the frequency of the value $\hat{y}$ and $Bin(N,p)$ denotes a Binomial distribution with parameters $N$ and $p$. Therefore, we can estimate $\underline{p}$ by leveraging the one-sided Clopper-Pearson method. Specifically, we have: 
\begin{align}
    \label{equation_to_estimate_pa}
    \underline{p}=B(\alpha;m_{\hat{y}},N-m_{\hat{y}}+1),
\end{align}
where $1-\alpha$ represents the confidence level and $B(\alpha;m_{\hat{y}},N-m_{\hat{y}}+1)$ denotes the $\alpha$th quantile of the beta distribution with parameters $m_{\hat{y}}$ and $N-m_{\hat{y}}+1$.

\myparatight{Solving the optimization problem} After obtaining the probability bound $\underline{p}$, we solve the optimization problem in Equation~\ref{problemK} to obtain $L$. The key to solve the optimization problem is to compute $\text{Pr}(\mathbf{x} \oplus \bm{\epsilon} \in \mathcal{H}(e))$ and $\text{Pr}(\mathbf{x} \oplus \bm{\delta} \oplus \bm{\epsilon} \in \mathcal{H}(e))$ for each $e\in \{-n, -n+1, \cdots, n\}$ when $||\bm{\delta}||_0=l$. Specifically, we have:
\begin{align}
\label{probabilityX}
&\text{Pr}(\mathbf{x} \oplus \bm{\epsilon} \in \mathcal{H}(e)) 
= \sum_{i=\max\{0,e\}}^{\min\{n,n+e\}} \beta^{n-(i-e)} (1-\beta)^{(i-e)} \cdot \theta(e,i)\\
\label{probabilityY}
&\text{Pr}(\mathbf{x} \oplus \bm{\delta} \oplus \bm{\epsilon} \in \mathcal{H}(e)) 
=\sum_{i=\max\{0,e\}}^{\min\{n,n+e\}} \beta^{n-i} (1-\beta)^i \cdot \theta(e,i),
\end{align}
where $\theta(e,i)$ is defined as follows:
{\footnotesize
\begin{align}
\theta(e,i) = 
    \begin{cases}
      0,  &\textrm{ if } (e+l) \textrm{ mod } 2 \neq 0, \\
      0,  &\textrm{ if } 2i-e<l, \\
  {n-l \choose \frac{2i-e-l}{2}} {l \choose \frac{l-e}{2}}, &\textrm{ otherwise } 
  \end{cases}
\end{align}
}
The calculation details can be found in Appendix~\ref{compute_algorithm_part}. Once we can compute the probabilities $\text{Pr}(\mathbf{x} \oplus \bm{\epsilon} \in \mathcal{H}(e))$ and $\text{Pr}(\mathbf{x} \oplus \bm{\delta} \oplus \bm{\epsilon} \in \mathcal{H}(e))$, we can iteratively find the largest $l$ such that the constraint in Equation~\ref{inequalityconstraint} is satisfied. Such largest $l$ is our certified perturbation size $L$.

\myparatight{Complete certification algorithm}  Algorithm~\ref{alg:certify} shows our complete certification algorithm. 
The function \textsc{SampleUnderNoise}  randomly samples $N$ noise from the noise distribution defined in Equation~\ref{define_of_epsilon_variable}, adds each noise to the graph structure, and computes the frequency of the function $f$'s output 0 and 1. Then, our algorithm estimates $\hat{y}$ and $\underline{p}$. Based on $\underline{p}$, the function \textsc{CertifiedPerturbationSize} computes the certified perturbation size by solving the optimization problem in Equation~\ref{problemK}. 
Our algorithm returns ($\hat{y}$, $L$) if $\underline{p} >0.5$ and ABSTAIN otherwise. The following proposition shows the probabilistic guarantee of our certification algorithm.

\begin{algorithm}[t]
    \SetAlgoLined
    \SetNoFillComment
    \DontPrintSemicolon
    \KwIn{$f$, $\beta$, $\mathbf{x}$, $N$, $\alpha$.}
    \KwOut{ABSTAIN or $(\hat{y},L)$.}
    $m_0$, $m_1=  \textsc{SampleUnderNoise}(f,\beta,\mathbf{x},N)$ \\
    $\hat{y}=\argmax_{i\in\{0,1\}}m_i$ \\
    $\underline{p}=B(\alpha;m_{\hat{y}},N-m_{\hat{y}}+1)$ \\
    
    \If{$\underline{p} >0.5$}{
        $L=\textsc{CertifiedPerturbationSize}(\underline{p})$\\
        \Return{$(\hat{y},L)$} 
    }
    \Else{
        \Return{\textup{ABSTAIN}}
    }
    \caption{\textsc{Certify}}
    \label{alg:certify}
\end{algorithm}

\begin{restatable}{propost}{probabilisticguarantee}
    \label{probability_guarantees}
    With probability at least $1-\alpha$ over the randomness in Algorithm~\ref{alg:certify}, if the algorithm returns an output value $\hat{y}$ and a certified perturbation size $L$ (i.e., does not ABSTAIN), then we have $g(\mathbf{x}\oplus\bm{\delta})=\hat{y}, \forall ||\bm{\delta}||_0 \leq L$.
\end{restatable}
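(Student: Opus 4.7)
The plan is to reduce this proposition to Theorem~\ref{CertifiedPerturbationSize} by carefully accounting for the only source of randomness in the algorithm, namely the Monte-Carlo sampling in \textsc{SampleUnderNoise}. Everything downstream of the sampling step (computing $\hat{y}$, applying the Clopper-Pearson bound, solving the optimization in Equation~\ref{problemK}) is a deterministic function of the sample counts $m_0, m_1$, so the probability in the statement is really a probability over the joint distribution of those counts.

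First I would observe that, conditioned on any fixed value of $\hat{y}\in\{0,1\}$ chosen by the algorithm, $m_{\hat{y}}$ is distributed as a sample from $\mathrm{Bin}(N,p)$ with $p = \mathrm{Pr}(f(\mathbf{x}\oplus\bm{\epsilon})=\hat{y})$. The one-sided Clopper-Pearson construction guarantees that the random lower bound $\underline{p} = B(\alpha;m_{\hat{y}},N-m_{\hat{y}}+1)$ satisfies $\mathrm{Pr}(p \geq \underline{p}) \geq 1-\alpha$. Thus, with probability at least $1-\alpha$ over the sampling, the estimated lower bound is a true lower bound on the probability that $f$ outputs $\hat{y}$ under the smoothing noise $\bm{\epsilon}$.

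Next I would handle the two branches of the algorithm. If the algorithm ABSTAINs, the conclusion is vacuous, so assume it returns $(\hat{y}, L)$. This happens precisely when $\underline{p} > 0.5$. Conditioning on the $(1-\alpha)$-probability event $\{p \geq \underline{p}\}$, we then have $p = \mathrm{Pr}(f(\mathbf{x}\oplus\bm{\epsilon})=\hat{y}) \geq \underline{p} > 0.5$. Since $f$ takes values in $\{0,1\}$, this forces $\hat{y}$ to coincide with $g(\mathbf{x}) = \arg\max_{y\in\{0,1\}} \mathrm{Pr}(f(\mathbf{x}\oplus\bm{\epsilon}) = y)$, and moreover verifies the hypothesis of Theorem~\ref{CertifiedPerturbationSize} for the label $y=\hat{y}$ with this particular $\underline{p}$.

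Finally, I would invoke Theorem~\ref{CertifiedPerturbationSize} directly: since $L$ is computed by \textsc{CertifiedPerturbationSize}$(\underline{p})$ as the maximizer in Equation~\ref{problemK}, the theorem yields $g(\mathbf{x}\oplus\bm{\delta}) = \hat{y}$ for all $\|\bm{\delta}\|_0 \leq L$ on this event. Combining with the $(1-\alpha)$ confidence of the Clopper-Pearson step gives the claimed overall probability bound. The only subtle point, and the one I would spell out carefully, is that Clopper-Pearson's coverage guarantee must be applied to the same $\hat{y}$ that the algorithm selects; because there are only two classes and $m_{\hat{y}} \geq N/2$ by construction, this is legitimate — the bound is valid for whichever class is the majority in the sample, so no additional union-bound correction is needed. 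That bookkeeping is the main thing to get right; the rest is a direct appeal to Theorem~\ref{CertifiedPerturbationSize}.
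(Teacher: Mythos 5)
Your proof follows the same route as the paper's own: use the one-sided Clopper--Pearson coverage guarantee to conclude that, with probability at least $1-\alpha$ over the sampling, $\mathrm{Pr}(f(\mathbf{x}\oplus\bm{\epsilon})=\hat{y})\geq \underline{p}$, and then, since the algorithm returns only when $\underline{p}>0.5$ (the ABSTAIN branch being vacuous), invoke Theorem~\ref{CertifiedPerturbationSize} with $y=\hat{y}$ to get $g(\mathbf{x}\oplus\bm{\delta})=\hat{y}$ for all $\|\bm{\delta}\|_0\leq L$. The paper's proof is simply a terser version of this argument; your additional bookkeeping (that $\hat{y}$ coincides with $g(\mathbf{x})$ on the good event, and the remark about applying Clopper--Pearson to the sample-selected class $\hat{y}$) elaborates on points the paper passes over silently but does not change the approach.
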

\begin{proof}
    See Appendix~\ref{proof_of_probability_guarantee}. 
\end{proof}

\section{Evaluation}

\begin{table}[!t] 
\centering
\caption{Dataset statistics.}
\centering
\begin{tabular}{|c|c|c|c|c|} \hline 
{\bf Dataset} & {\small \#Nodes} & {\small \#Edges}  & {\small \#Communities}   \\ \hline
{\bf Email} &  {\small 1,005} & {\small 25,571} & {\small 42} \\ \hline
{\bf DBLP} &  {\small 317,080} & {\small 1,049,866} & {\small 13,477} \\ \hline
{\bf Amazon} &  {\small 334,863} & {\small 925,872} & {\small 75,149} \\ \hline
\end{tabular}
\label{dataset_stat}
\end{table}

\begin{figure*}[!h]
\center
\subfloat[Email]{\includegraphics[width=0.32\textwidth]{./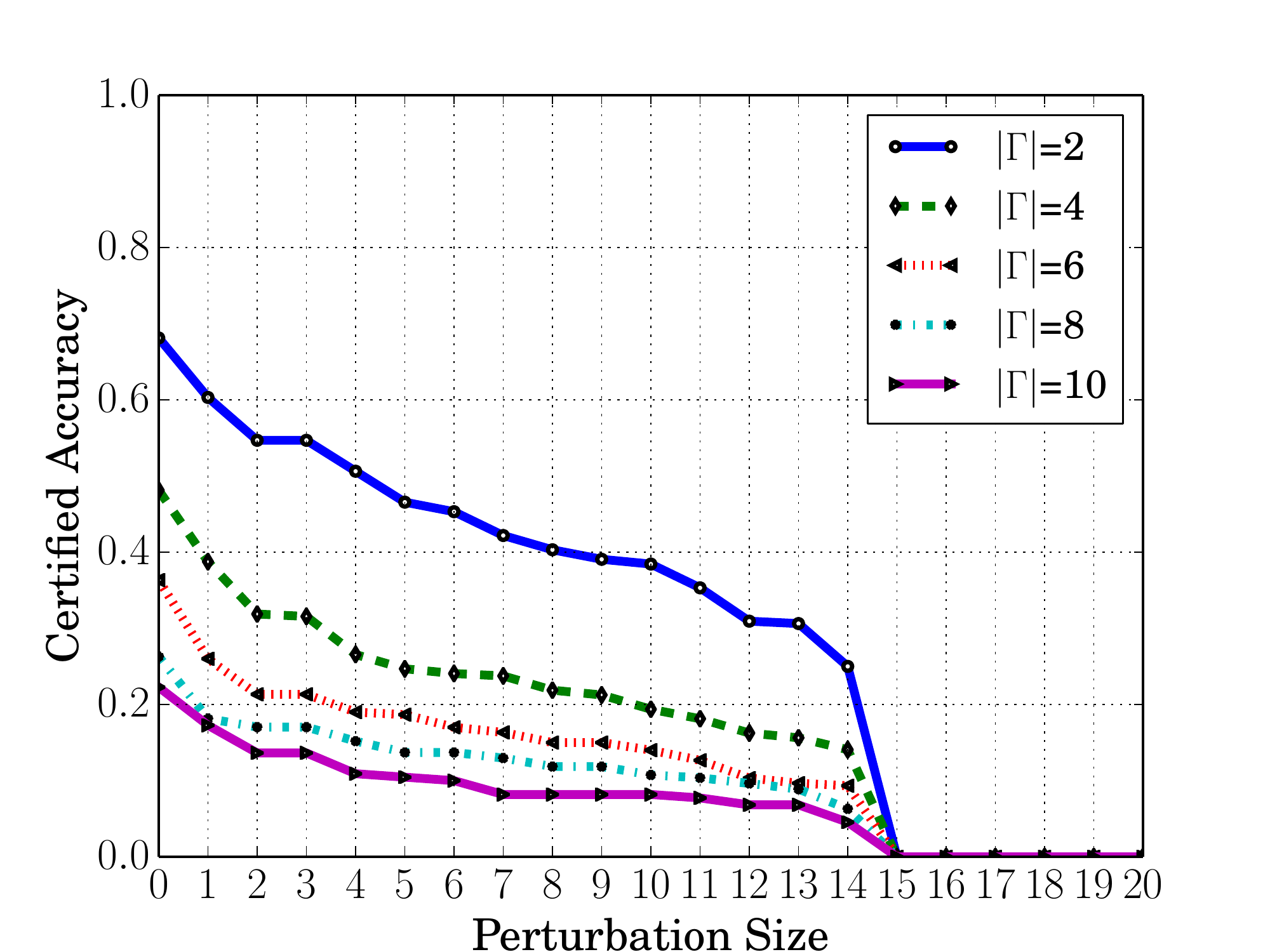} \label{fig1}} 
\subfloat[DBLP]{\includegraphics[width=0.32\textwidth]{./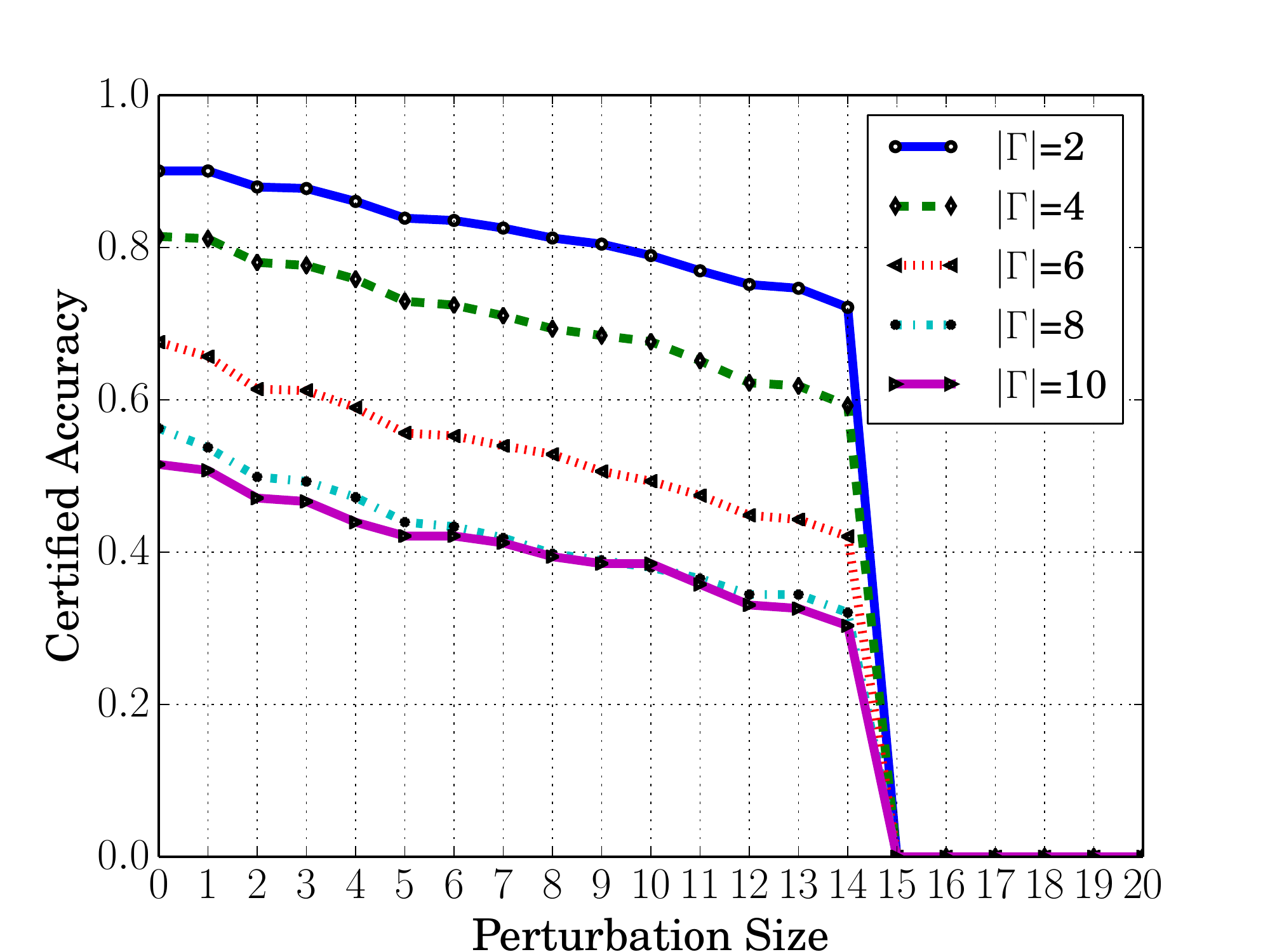} \label{fig2}} 
\subfloat[Amazon]{\includegraphics[width=0.32\textwidth]{./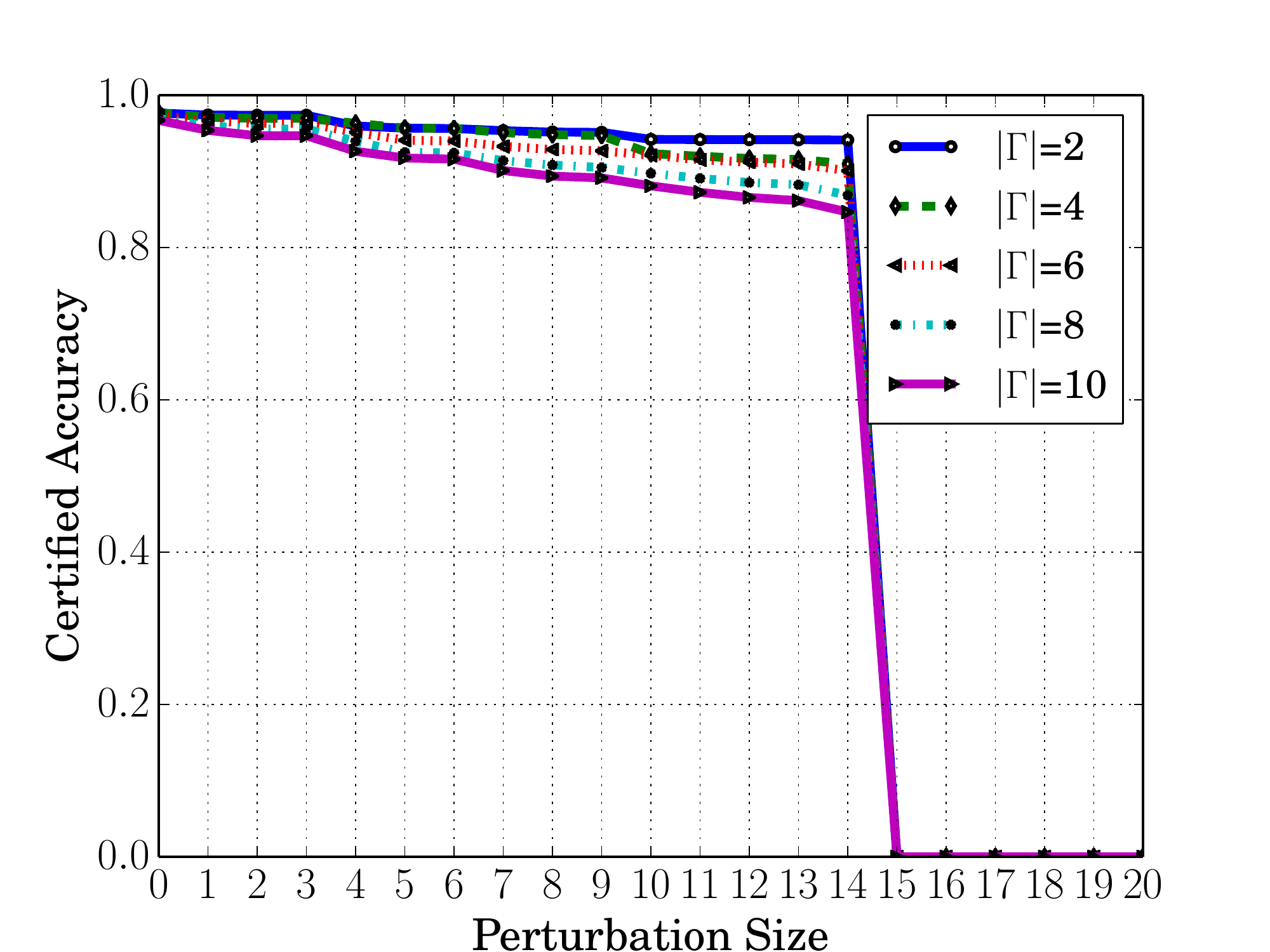} \label{fig2}} 
\caption{Impact of the number of victim nodes $|\Gamma|$ on defending against splitting attacks.}
\label{impact_setsize}
\end{figure*}

\begin{figure*}[!h]
\center
\subfloat[Email]{\includegraphics[width=0.32\textwidth]{./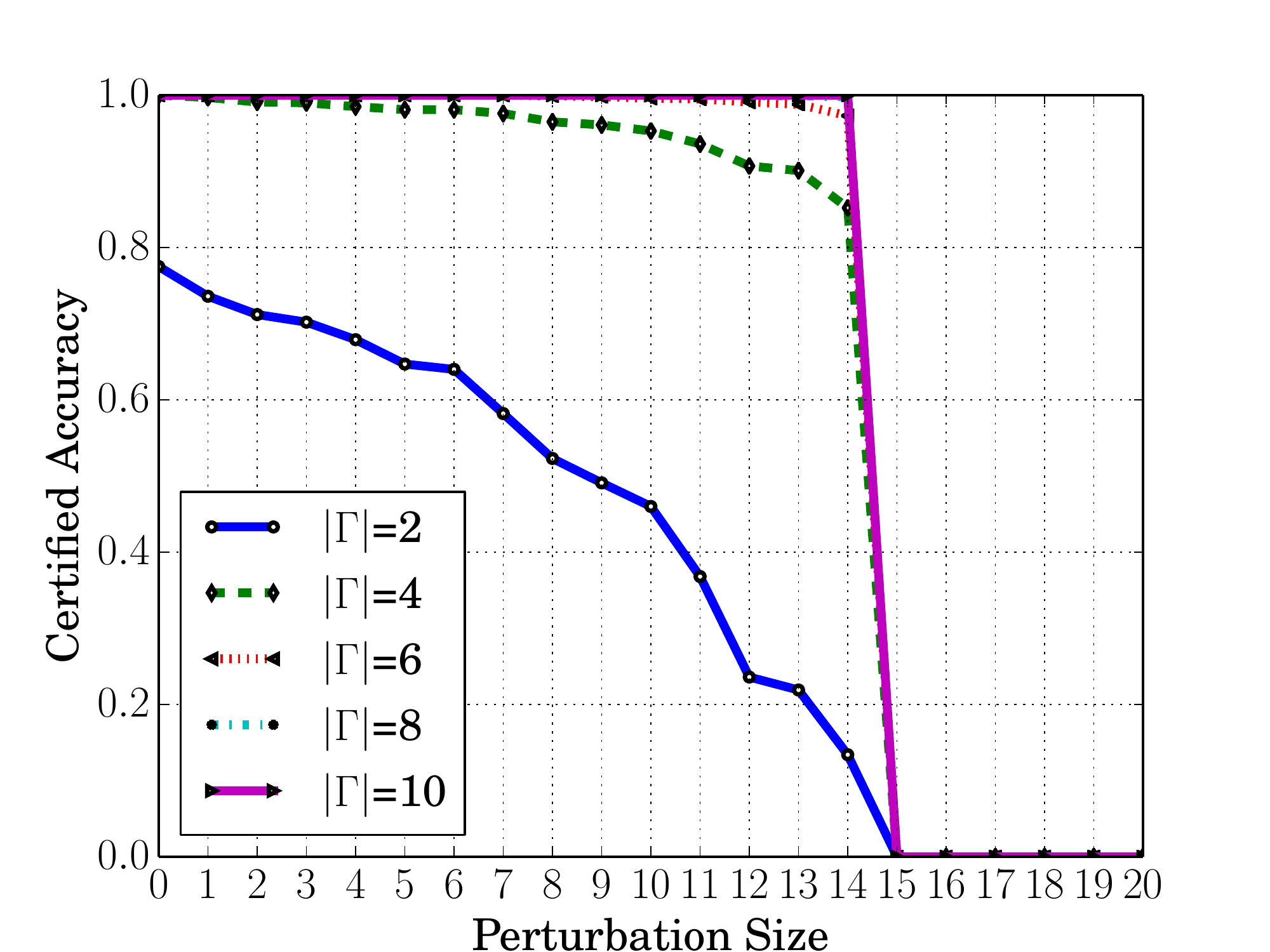} \label{fig1}} 
\subfloat[DBLP]{\includegraphics[width=0.32\textwidth]{./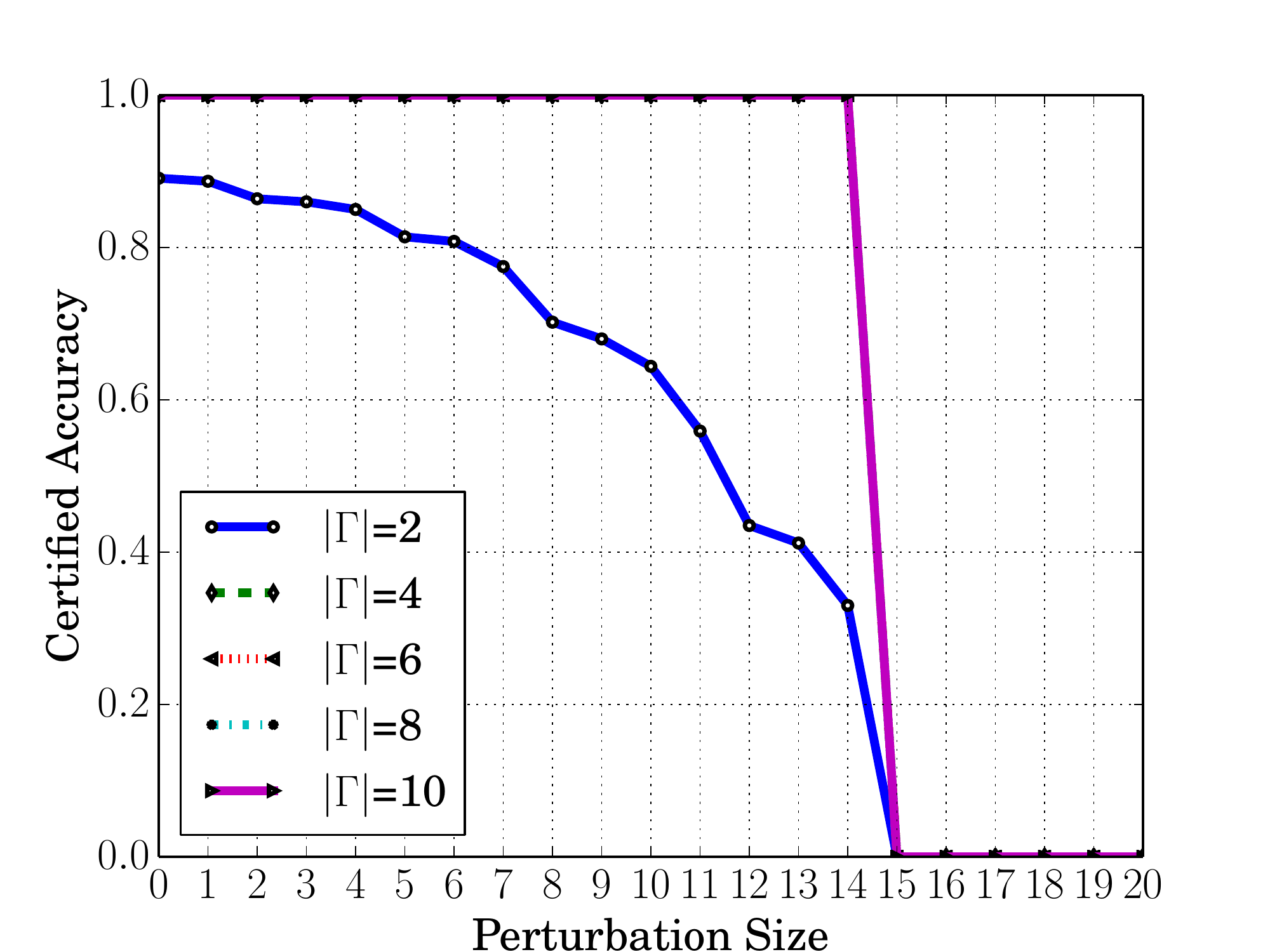} \label{fig2}} 
\subfloat[Amazon]{\includegraphics[width=0.32\textwidth]{./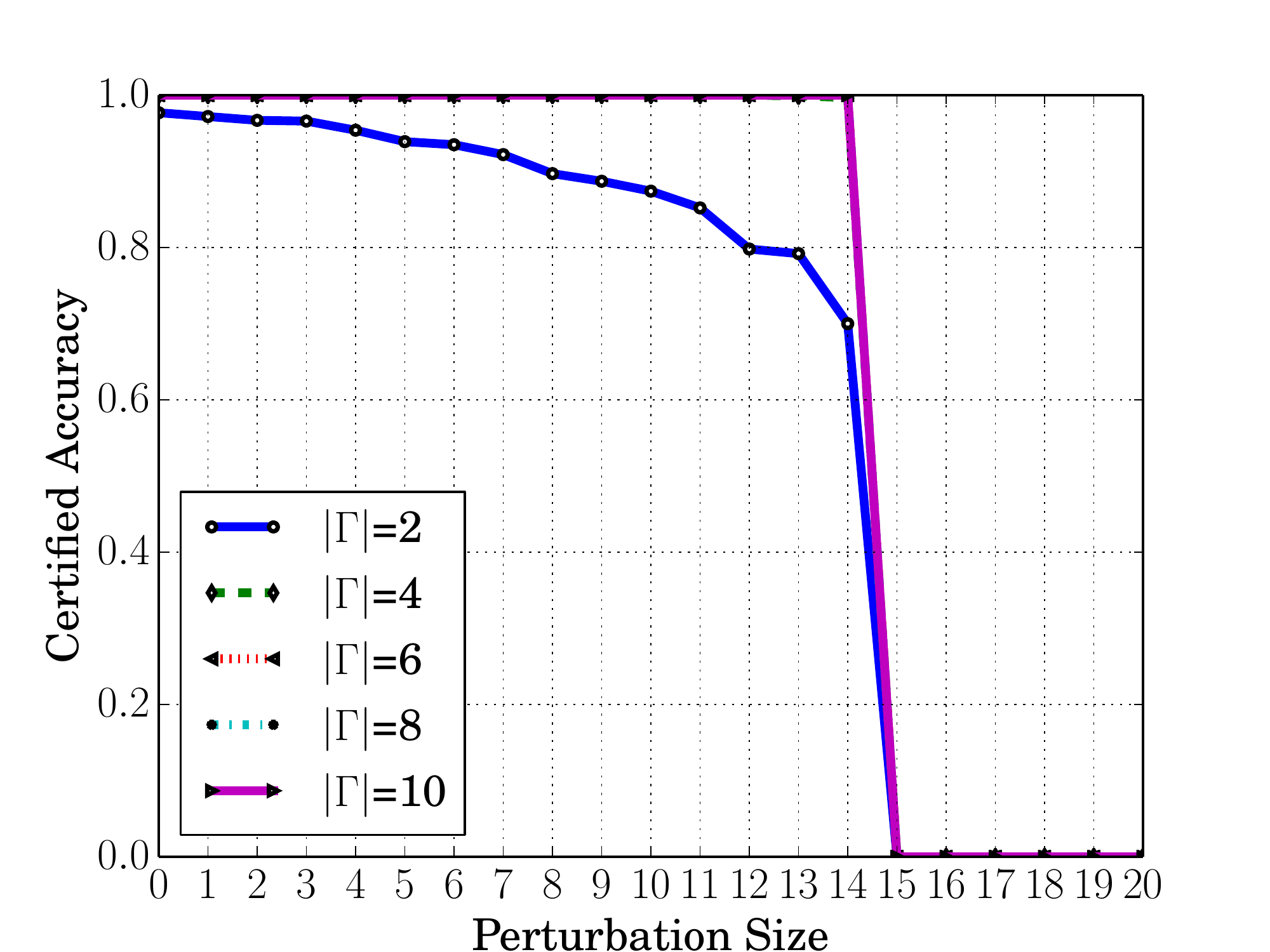} \label{fig2}} 
\caption{Impact of the number of victim nodes $|\Gamma|$ on defending against merging attacks.}
\label{impact_setsize1}
\end{figure*}

\subsection{Experimental Setup}

\myparatight{Datasets} We consider three undirected graph datasets with ``ground-truth'' communities, i.e., Email, DBLP, and Amazon. Table 1 shows the datasets and their statistics. We obtained the datasets from SNAP (http://snap.stanford.edu/). 

{\bf Email.} Email dataset describes the communications between members in a research institution. The graph consists of 1,005 nodes, each of which represents a member in the institution; and 25,571 edges, indicating the email communications between  members. The 42 departments in the institution are considered as the ground-truth communities and each node belongs to exactly one of them.

{\bf DBLP.} DBLP dataset contains 317,080 nodes and 1,049,866 edges. Each node represents a researcher in the computer science field. If two researchers co-authored at least one paper, then there is an edge between them in the graph. Every connected component with no less than 3 nodes is marked as a  community, which results in a total of 13,477 distinct communities. 

{\bf Amazon.} Amazon dataset is a network representing relations between products on Amazon website. The graph has 334,863 nodes and 925,872 edges. The nodes represent different products and two products are connected if they are frequently bought together. Overall, 75,149  communities are considered, each consisting of a connected component within the same category of products.

\myparatight{Community detection algorithm}
We use the popular Louvain's method~\cite{blondel2008fast} to detect communities. 
The method optimizes  modularity in a heuristic and iterative way. We note that the method produces  communities in multiple hierarchical levels, and we take the last level since in which the maximum of the modularity is attained. We use a publicly available implementation.\footnote{https://sites.google.com/site/findcommunities/}

\begin{figure*}[!h]
\center
\subfloat[Email]{\includegraphics[width=0.32\textwidth]{./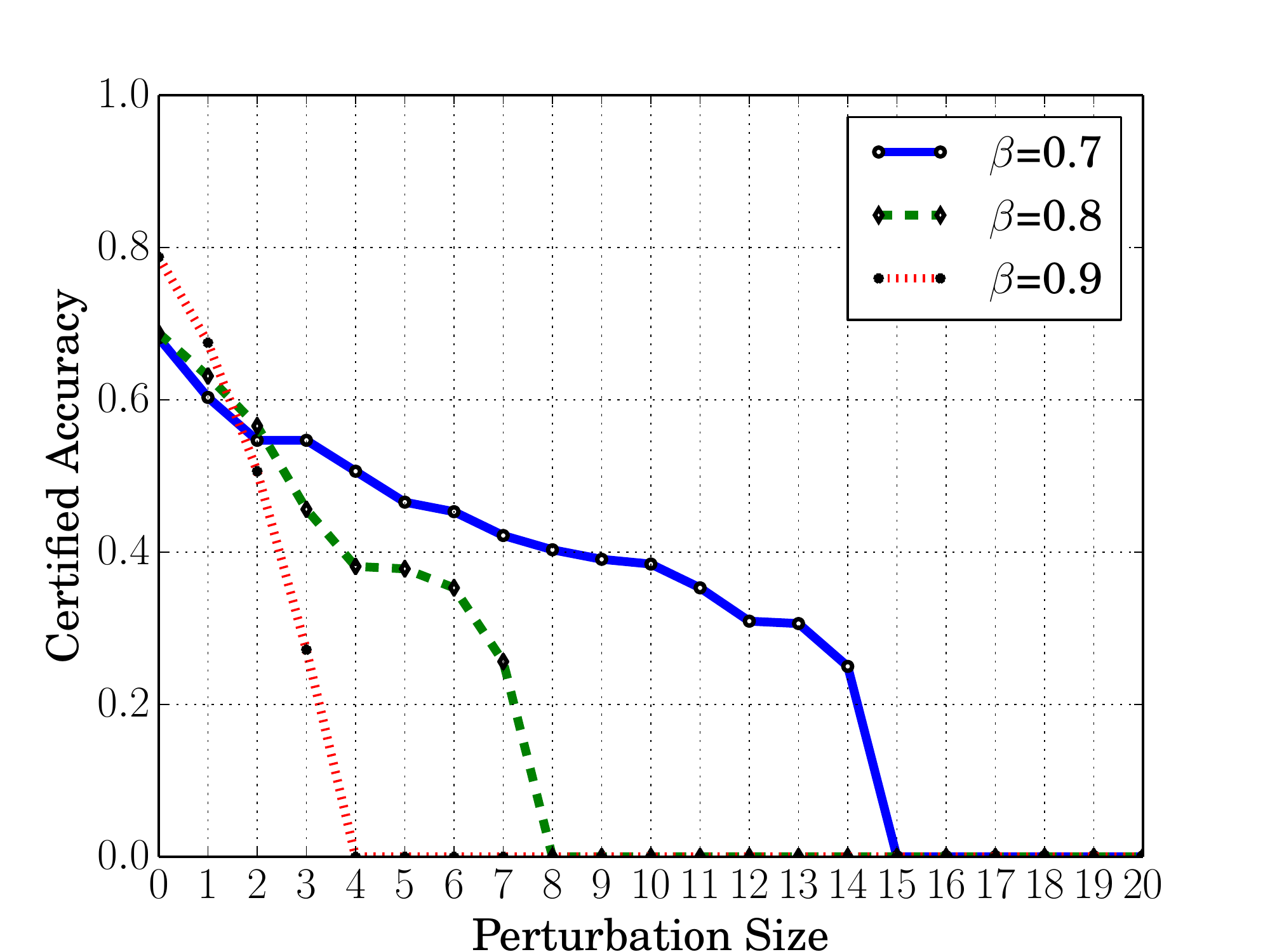} \label{fig1}} 
\subfloat[DBLP]{\includegraphics[width=0.32\textwidth]{./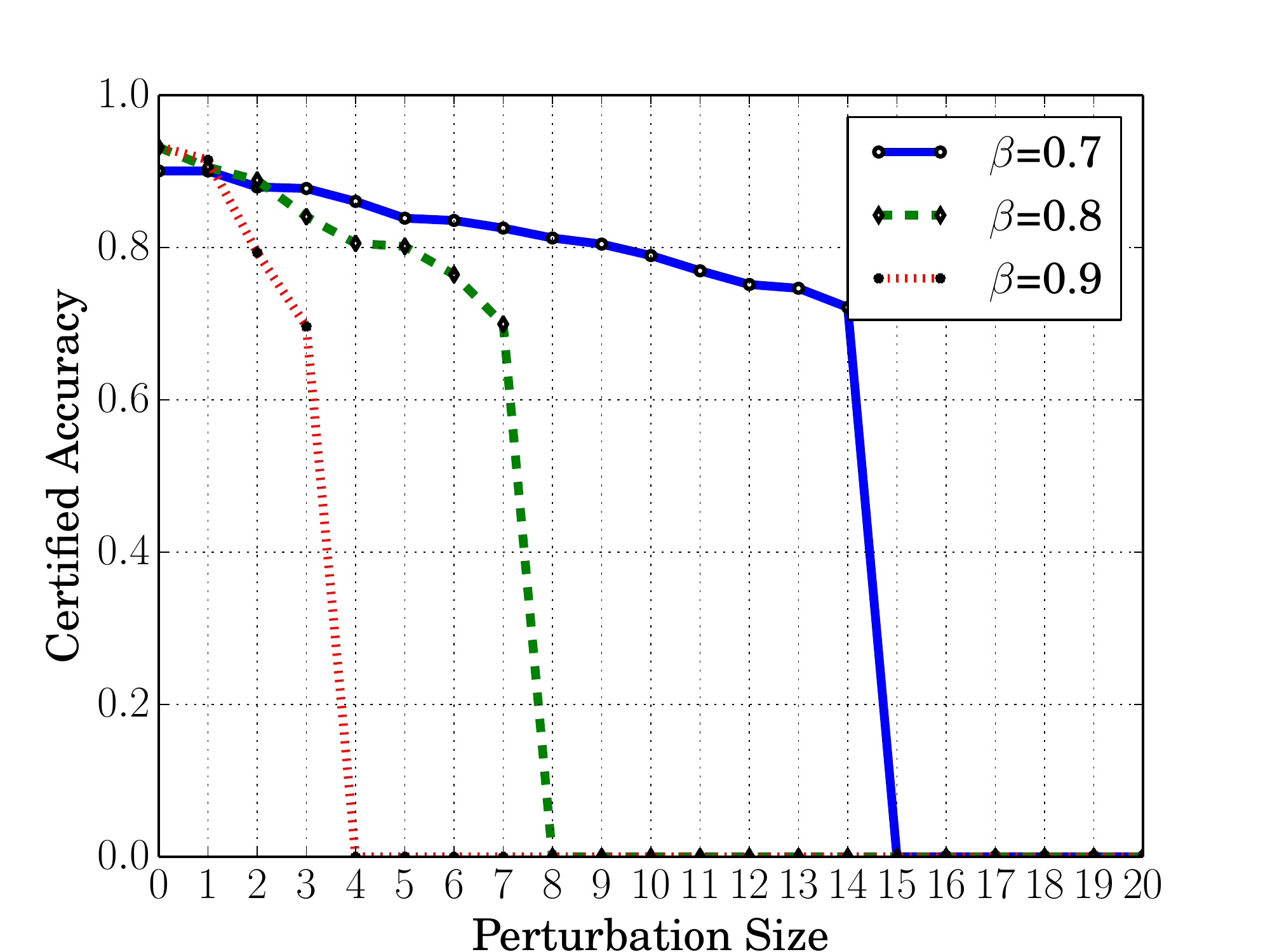} \label{fig2}} 
\subfloat[Amazon]{\includegraphics[width=0.32\textwidth]{./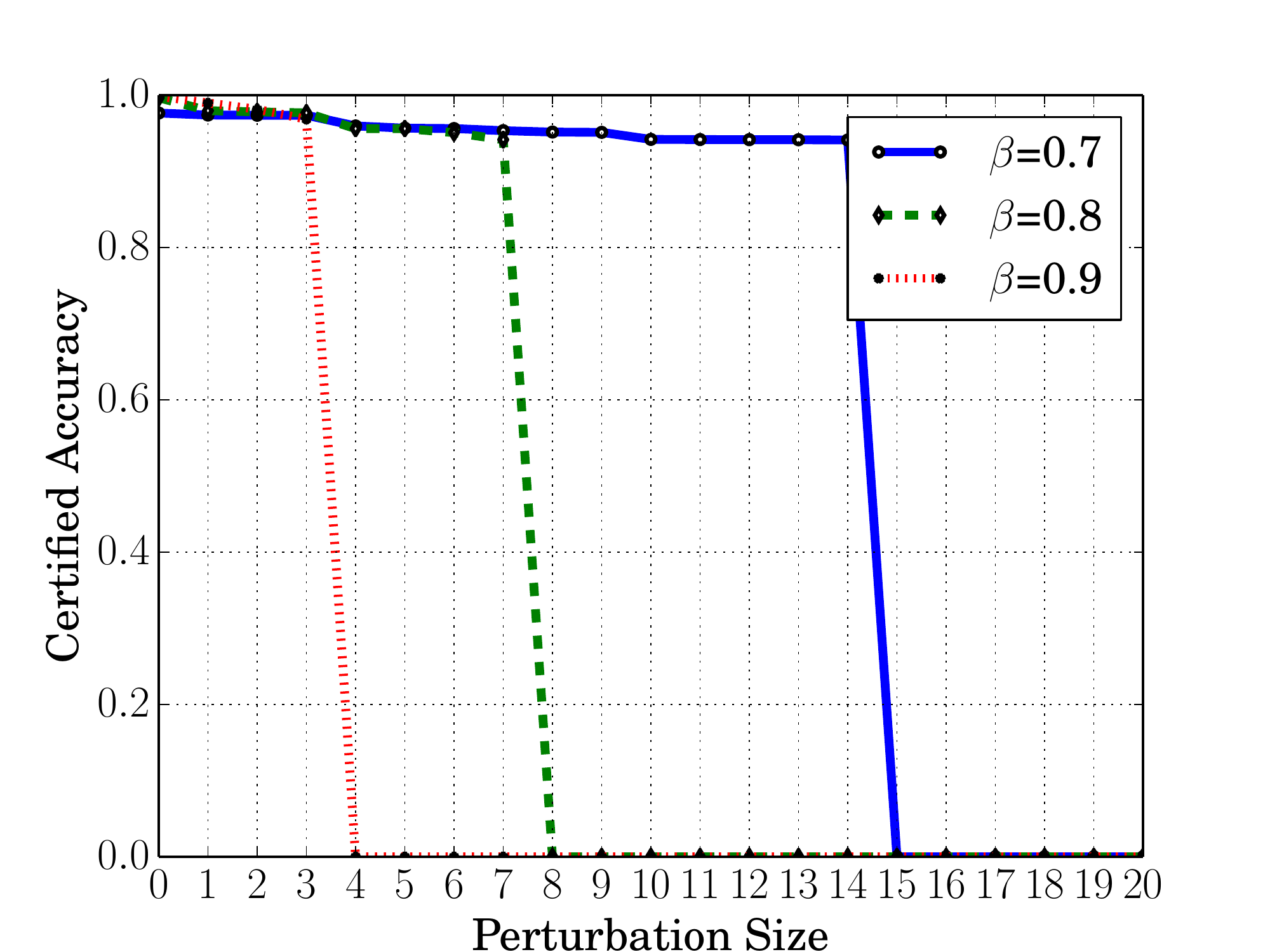} \label{fig2}} 
\caption{Impact of the noise parameter $\beta$ on defending against splitting attacks.}
\label{impact_noise_para}
\end{figure*}

\begin{figure*}[!h]
\center
\subfloat[Email]{\includegraphics[width=0.32\textwidth]{./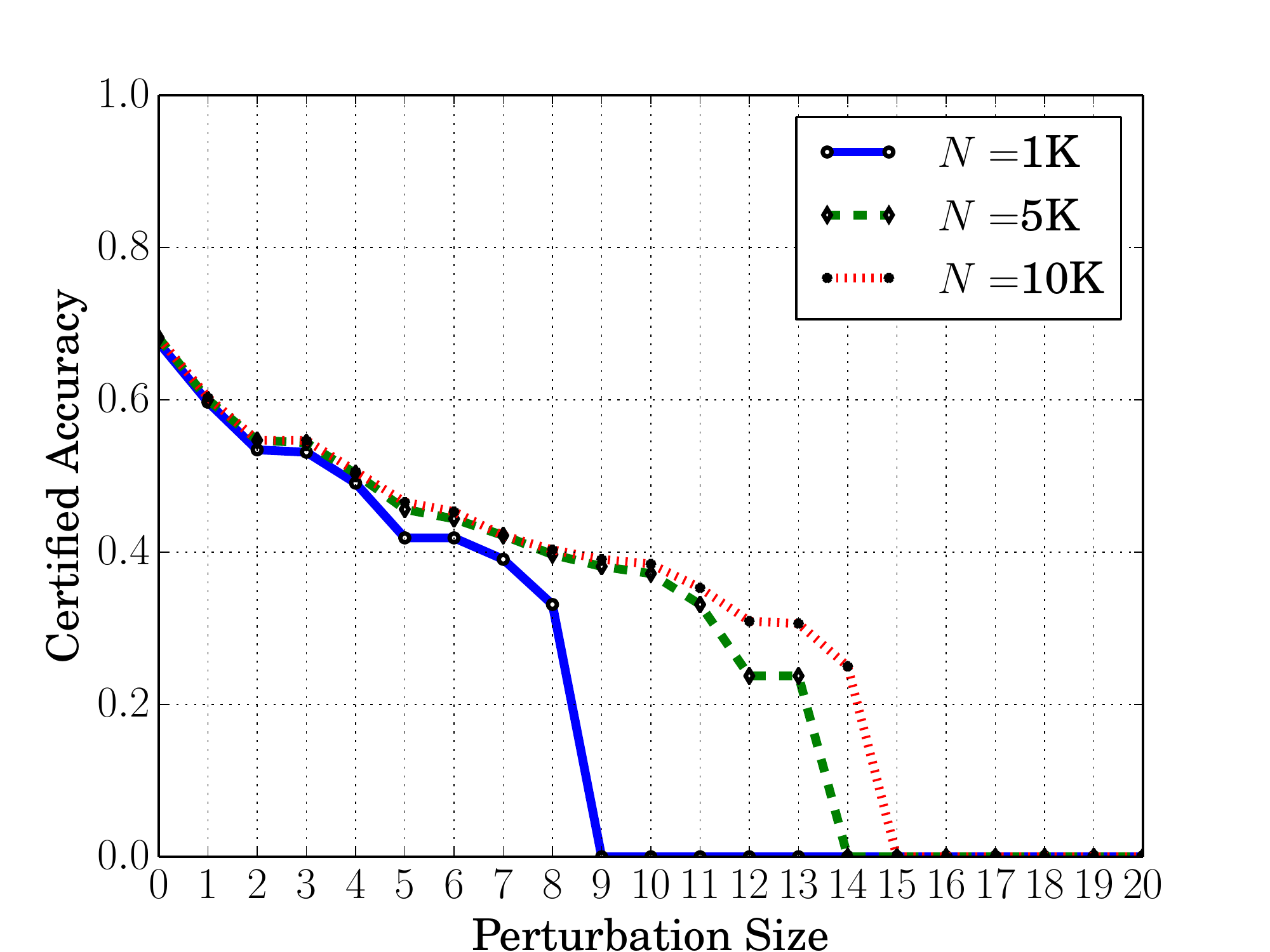} \label{fig1}} 
\subfloat[DBLP]{\includegraphics[width=0.32\textwidth]{./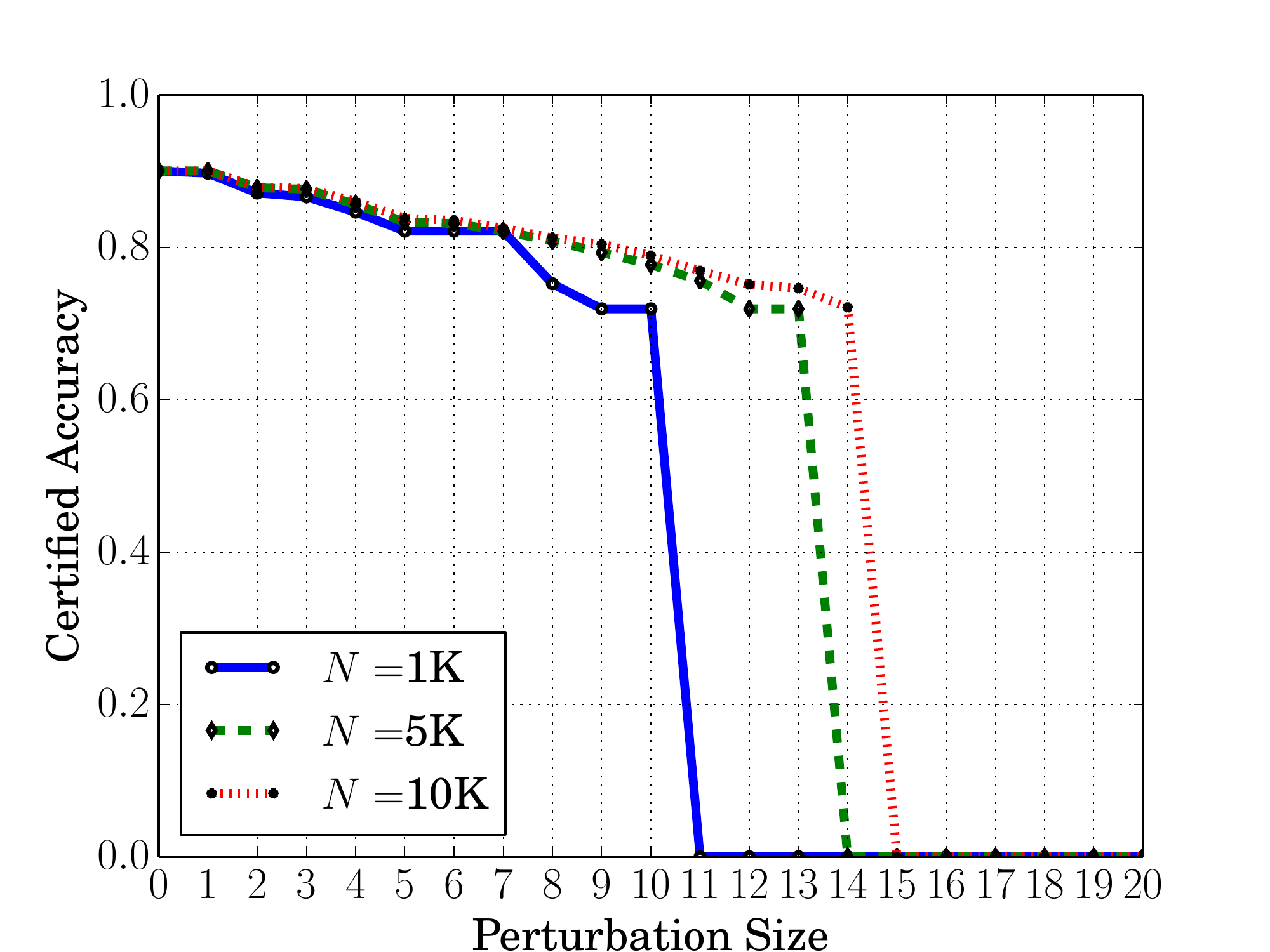} \label{fig2}} 
\subfloat[Amazon]{\includegraphics[width=0.32\textwidth]{./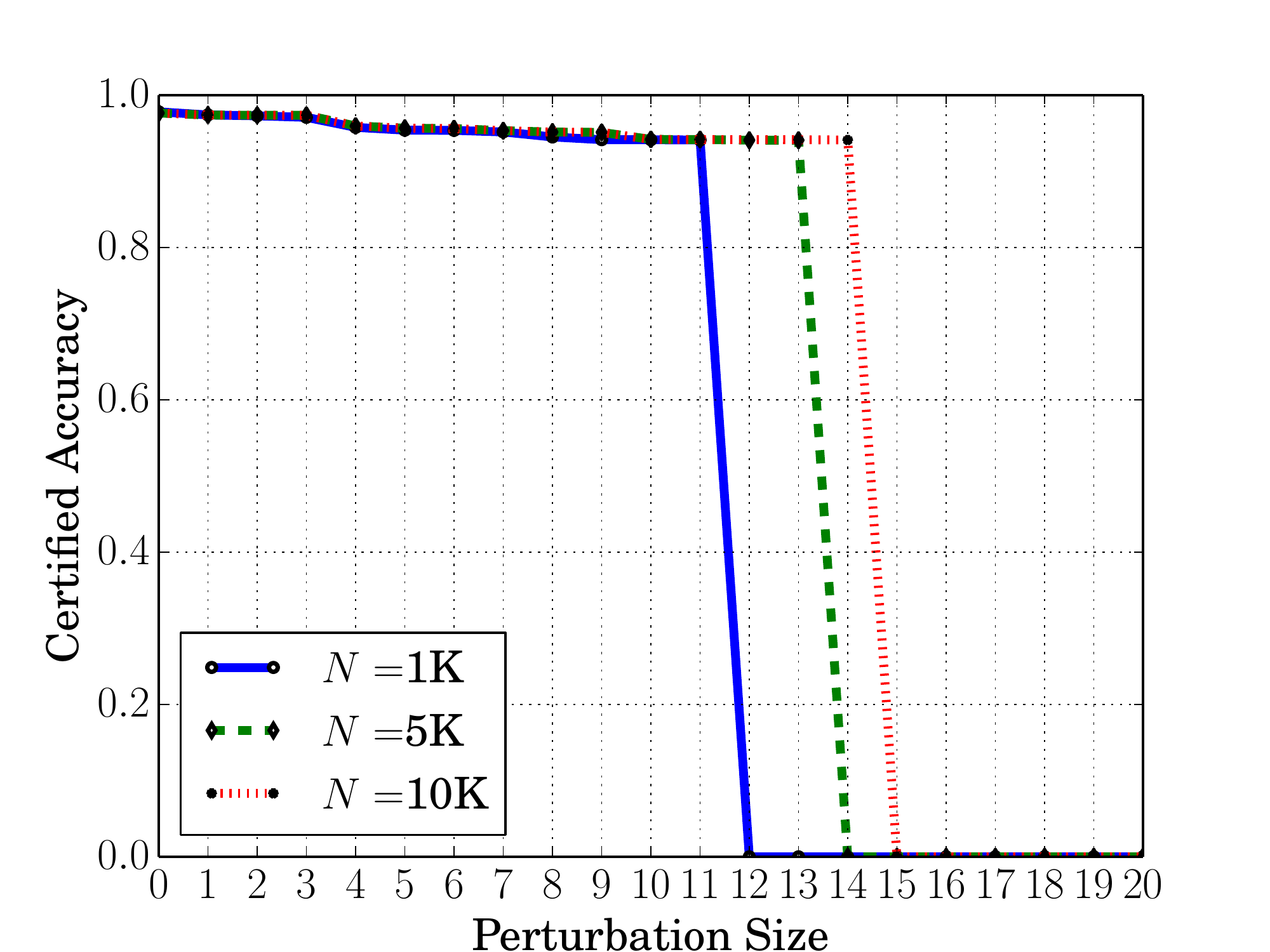} \label{fig2}} 
\caption{Impact of the number of noise samples $N$ on defending against splitting attacks.}
\label{impact_samples}
\end{figure*}

\myparatight{Evaluation metric}
We use \emph{certified accuracy} as the metric to evaluate our certification method. We take defending against the splitting attack as an example to illustrate certified accuracy. 
Suppose we are given $M$ sets of victim nodes $\Gamma_1,\Gamma_2,\cdots,\Gamma_M$. The nodes in each victim set $\Gamma_i$ are in the same ground-truth community. The goal of a splitting attack is to  perturb the graph structure such that the Louvain's method groups the victim nodes in a set $\Gamma_i$ into at least two communities. Our certification algorithm in Algorithm~\ref{alg:certify} produces an output $y_{i}$ and a certified perturbation size $L_i$ for each victim set $\Gamma_i$. $y_{i}=1$ means that we can provably guarantee that the nodes in  $\Gamma_i$ are grouped into the same community. Given a perturbation size $l$, we define the certified accuracy $CA(l)$ at the perturbation size $l$ as the fraction of sets of victim nodes whose output $y_{i}=1$ and certified perturbation size is at least $l$. Our certified accuracy $CA(l)$ is the fraction of sets of victim nodes that our method can provably detect as in the same community when an attacker adds or removes at most $l$ edges in the graph. Formally, we have:
\begin{align}
\text{\bf Certified } &\text{\bf Accuracy for Defending against Splitting Attacks:} \nonumber \\
&CA(l)=\frac{\sum_{i=1}^M \mathbb{I}(y_{i}=1) \mathbb{I}(L_i\geq l)}{M},
\end{align}
where $\mathbb{I}$ is an indicator function. 
For merging attacks, the nodes in a victim set $\Gamma_i$ are in different ground-truth communities. The goal of a merging attack is to  perturb the graph structure such that the Louvain's method groups the victim nodes in a set $\Gamma_i$ into the same community.
Given a perturbation size $l$, we define the certified accuracy $CA(l)$ at the perturbation size $l$ as the fraction of sets of victim nodes whose output $y_{i}=0$ and certified perturbation size is at least $l$. Our certified accuracy $CA(l)$ is the fraction of sets of victim nodes that our method can provably detect as in more than one communities when an attacker adds or removes at most $l$ edges in the graph. Formally, we have:
\begin{align}
\text{\bf Certified } &\text{\bf Accuracy for Defending against Merging Attacks:} \nonumber \\
&CA(l)=\frac{\sum_{i=1}^M \mathbb{I}(y_{i}=0) \mathbb{I}(L_i\geq l)}{M}.
\end{align}

\myparatight{Parameter setting}
Our method has the following parameters: the noise parameter $\beta$, the confidence level $1-\alpha$, and the number of samples $N$.
Unless otherwise mentioned, we use the following default parameters: $\beta=0.7$, $1-\alpha=0.999$, and $N=10,000$.  To estimate certified accuracy for defending against splitting attacks, we randomly sample two sets of $|\Gamma|$ nodes from each ground-truth community whose size is larger than $|\Gamma|$, and we treat them as victim sets. To estimate certified accuracy for defending against merging attacks, we randomly sample 1,000 victim sets, each of which includes nodes randomly sampled from $|\Gamma|$ different communities.   By default, we assume each set of victim nodes includes 2 nodes, i.e., $|\Gamma|=2$. We also study the impact of each parameter, including $\beta$, $1-\alpha$, $N$, and $|\Gamma|$.   
When studying the impact of one parameter, we fix the remaining parameters to be their default values. 
We randomly pick $100$ nodes as attacker-controlled nodes for each dataset, and the attacker perturbs the edges between them.

\begin{figure*}[!h]
\center
\subfloat[Email]{\includegraphics[width=0.32\textwidth]{./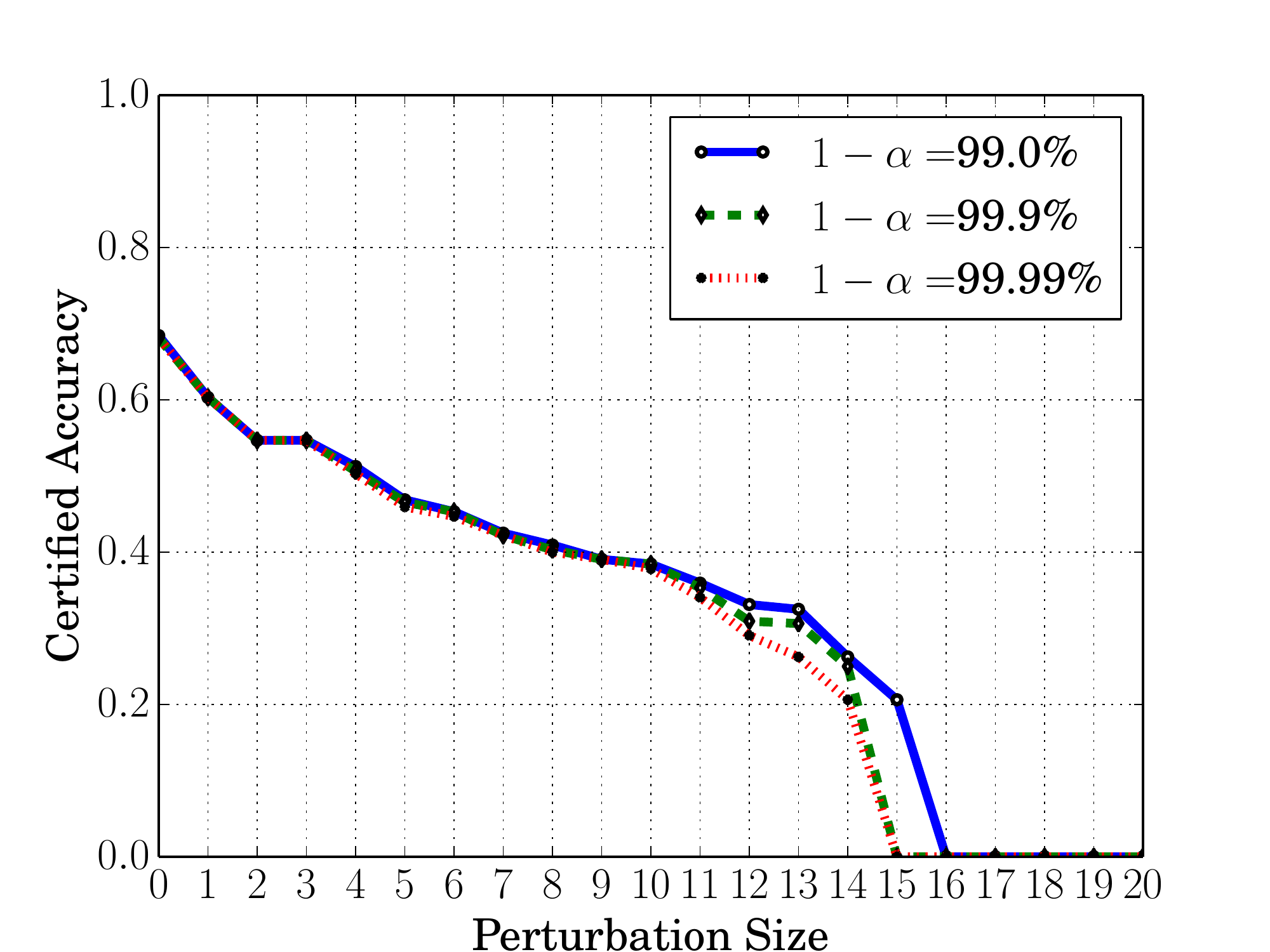} \label{fig1}} 
\subfloat[DBLP]{\includegraphics[width=0.32\textwidth]{./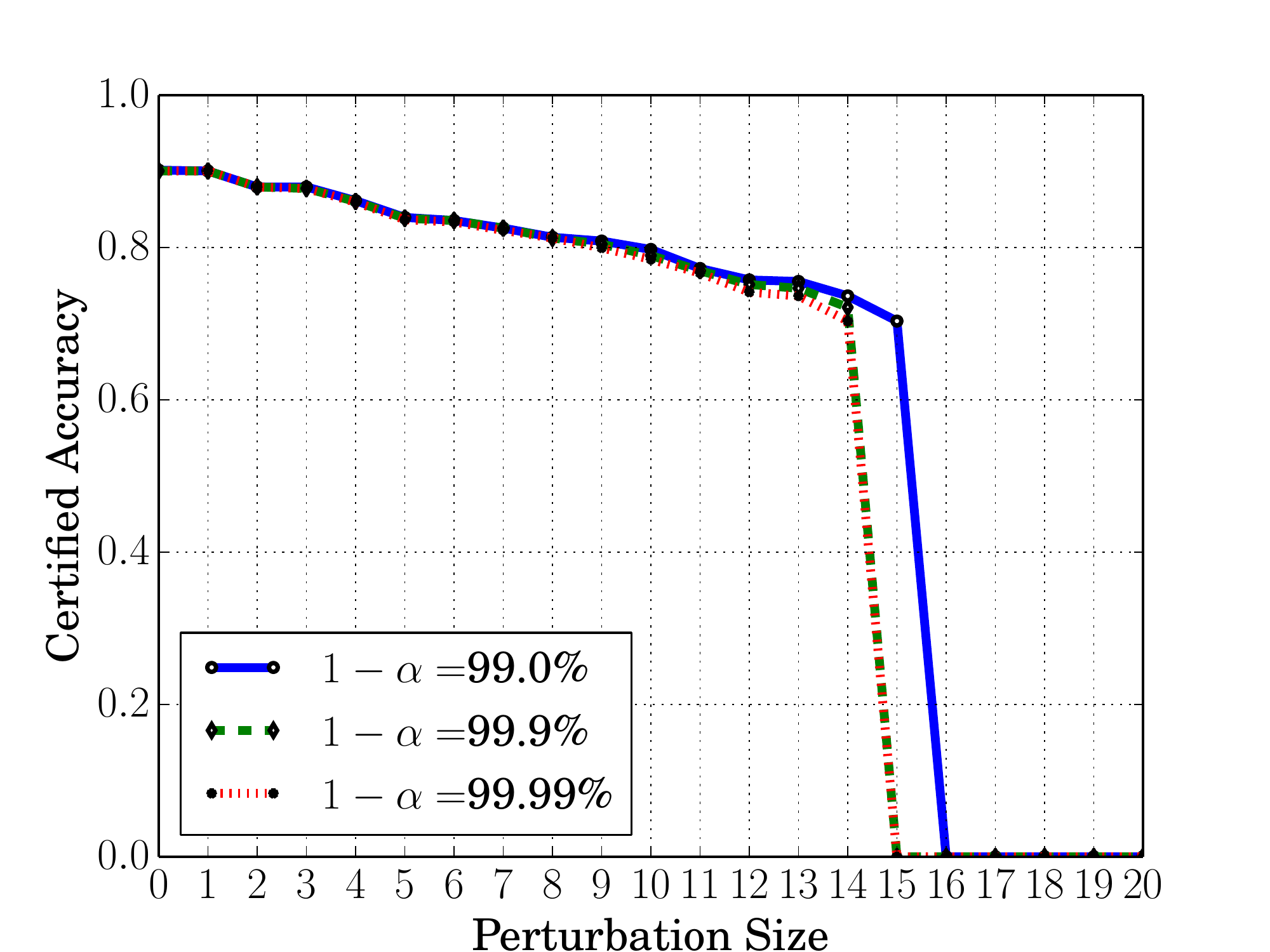} \label{fig2}} 
\subfloat[Amazon]{\includegraphics[width=0.32\textwidth]{./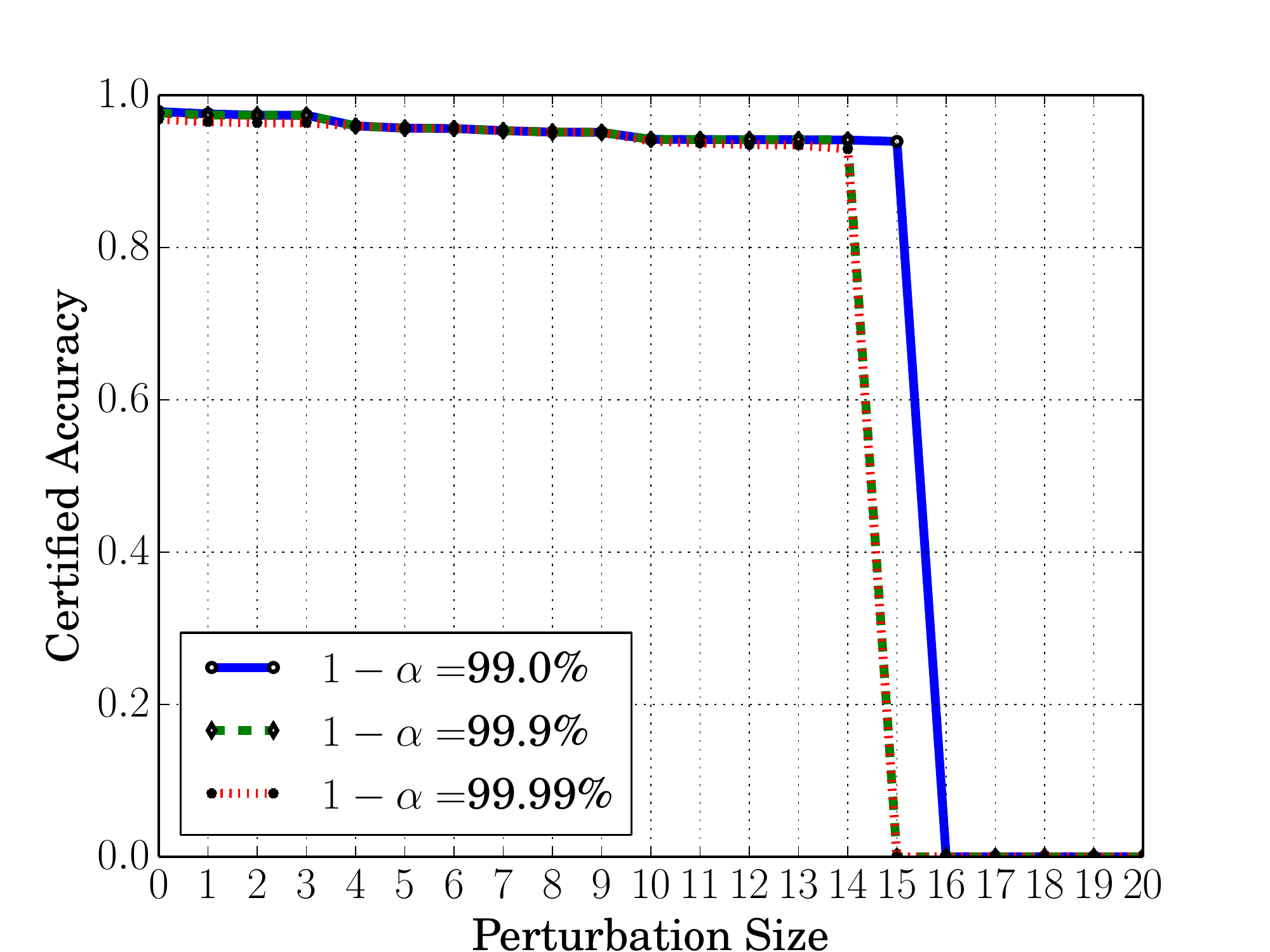} \label{fig2}} 
\caption{Impact of the confidence level $1-\alpha$ on defending against splitting attacks.}
\label{impact_alpha}
\end{figure*}

\subsection{Experimental Results}

\myparatight{Impact of the number of victim nodes $|\Gamma|$}
Figure~\ref{impact_setsize} shows the certified accuracy vs. perturbation size for defending against splitting attacks with different number of victim nodes on the three datasets, while Figure~\ref{impact_setsize1} shows the results for defending against merging attacks. We observe that as the number of victim nodes increases, the curve of the certified accuracy becomes lower for splitting attacks and higher for merging attacks. This is because it is harder to provably guarantee that a larger set of nodes are detected as in the same community (defending against splitting attacks); and it is easier to provably guarantee that a larger set of nodes are detected as in more than one communities (defending against merging attacks).

\myparatight{Impact of the noise parameter $\beta$}
Figure~\ref{impact_noise_para} shows the certified accuracy vs. perturbation size for defending against splitting attacks with different noise parameter $\beta$ on the three datasets. We observe that $\beta$ provides a tradeoff between normal accuracy without attacks and robustness. Specifically, when $\beta$ is larger, the normal accuracy, i.e., certified accuracy at perturbation size  0, is larger, while the certified  accuracy decreases more quickly as the perturbation size increases.  We also have similar observations of the certified accuracy vs. perturbation size for defending against merging attacks, and thus we omit the results for simplicity.

\myparatight{Impact of the number of sampled noise $N$}
Figure~\ref{impact_samples} shows the certified  accuracy vs. perturbation size for defending against splitting attacks with different numbers of sampled noise $N$ on the three datasets. We observe that the curve is higher as  $N$ increases. This is because a larger  $N$  makes the estimated probability bound $\underline{p}$ tighter and thus the certified perturbation size is also larger. We also have similar observations of the certified accuracy vs. perturbation size for defending against merging attacks, and thus we omit the results for simplicity.

\myparatight{Impact of the confidence level $1 - \alpha$}
Figure~\ref{impact_alpha} shows the certified  accuracy  vs. perturbation size for defending against splitting attacks with different confidence levels $1-\alpha$ on the three datasets. We observe that as the confidence level $1-\alpha$ increases, the curve of the certified  accuracy becomes lower. The reason is that a higher confidence level causes a looser estimated probability bound  $\underline{p}$ and thus the certified perturbation size is smaller. 
However, we note that the differences of the certified  accuracies between different confidence levels are negligible when the confidence levels are large enough. We also have similar observations of the certified accuracy vs. perturbation size for defending against merging attacks, and thus we omit the results for simplicity.

\section{Related Work}

\myparatight{Adversarial attacks to non-graph data and their defenses} 
For non-graph data,  \emph{adversarial example} is a well-known adversarial attack. Specifically, an attacker adds a carefully crafted perturbation to an input example such that a machine learning classifier makes predictions for the perturbed example as the attacker desires. The input example with carefully crafted perturbation is called adversarial example~\cite{Szegedy14,goodfellow2014explaining}. 
Various empirical defenses (e.g.,~\cite{papernot2016distillation,madry2017towards,goodfellow2014explaining}) have been proposed to defend against  adversarial examples. However,  these defenses were often soon broken by adaptive attacks~\cite{carlini2017adversarial,athalye2018obfuscated}. 

In response, various certified defenses (e.g.,~\cite{scheibler2015towards,cheng2017maximum,raghunathan2018certified,gehr2018ai2,weng2018towards}) against adversarial examples have been developed. 
Among these methods, randomized smoothing is state-of-the-art. Randomized smoothing turns an arbitrary classifier/function into a robust one via adding random noise to the input. Randomized smoothing was initially proposed as empirical defenses~\cite{cao2017mitigating,liu2018towards} without formal certified robustness guarantees. For example, Cao \& Gong~\cite{cao2017mitigating} proposed to use uniform noise sampled from a hypercube centered at an input. Lecuyer et al.~\cite{lecuyer2018certified} derived the first certified robustness guarantee for randomized smoothing with Gaussian or Laplacian noise by utilizing  differential privacy techniques. Li et al.~\cite{li2018second} further improved the certified robustness bound by using techniques in information theory. In particular, they applied data processing inequality to R\'enyi divergence and obtained certified robustness guarantee for Gaussian noise. Cohen et al.~\cite{cohen2019certified} proved the first tight bound for  certified robustness under isotropic Gaussian noise. 
Jia et al.~\cite{jia2020certified} derived a tight certified robustness of top-k predictions for randomized smoothing with Gaussian noise. In particular, they showed that a label is among the top-k labels predicted by the smoothed classifier when the adversarial perturbation is bounded.  
Our work uses randomized smoothing. However, different from the existing randomized smoothing methods, which assume continuous input and add continuous noise, we propose randomized smoothing on binary data and leverage it to certify robustness of community detection against splitting and merging attacks. 
 We note that a concurrent work~\cite{lee2019tight} generalized randomized smoothing to discrete data. The major difference between our approach and~\cite{lee2019tight} is that we leverage a variant of the Neyman-Pearson Lemma to derive the certified perturbation size. 

\myparatight{Adversarial attacks to graph data and their defenses} Compared to non-graph data, adversarial attacks to graph data and their defenses are much less studied.  \emph{Adversarial structural perturbation} is a recently proposed attack to graph data. For instance, several recent studies~\cite{dai2018adversarial,zugner2018adversarial,zugner2019adversarial,bojchevski2019adversarial,wang2019attacking} showed that Graph Neural Networks (GNNs) are vulnerable to adversarial structural perturbations. Specifically, an attacker can slightly perturb the graph structure and/or node features to mislead the predictions made by GNNs. Some empirical defenses~\cite{wu2019adversarial,xu2019topology,zhu2019robust} were proposed to defend against such attacks. However, these methods do not have certified robustness guarantees. Z{\"u}gner \& G{\"u}nnemann~\cite{Zugner2019Certifiable} developed the first certified robustness guarantee against node-feature perturbations for graph convolutional network~\cite{kipf2017semi}. Bojchevski \& G{\"u}nnemann~\cite{bojchevski2019certifiable} proposed the first method for verifying certifiable (non-)robustness of graph convolutional network against structural  perturbations. These work is different from ours as we focus on certifying robustness of community detection.

Multiple studies~\cite{nagaraja2010impact,waniek2018hiding,fionda2017community,chen2019ga,chen2017practical} have shown that community detection is vulnerable to adversarial structural perturbation. 
Several heuristic defenses~\cite{nagaraja2010impact,chen2017practical} were proposed to enhance the robustness of community detection against adversarial structural perturbations. However, these defenses lack formal guarantees. Our work is the first certified robustness guarantee of community detection against adversarial structural perturbations.

\section{Discussion and Limitations}
Given a set of nodes that are in the same ground-truth community (or in different ground-truth communities),  
our certified robustness guarantees that the nodes are provably detected as in the same community (or in different communities) when the number of added or removed edges in the graph is at most a certain threshold (called certified perturbation size). We note that when we add or remove enough edges in a graph, the ``ground-truth'' communities may change, and thus we may expect the set of nodes to be detected as in different communities (or in the same community). Therefore, the certified perturbation size should not be too large. We believe it is an interesting future work to explore what certified perturbation size should be expected for a particular application scenario. 

Our work shows that we can provably guarantee that a set of nodes are or are not in the same community when an attacker adds or deletes a bounded number of edges in the graph. However, it is still an open question on how to obtain communities from our smoothed community detection method.   One possible way to obtain communities is as follows: we first randomly pick a node as the initial community C. For each remaining node, we compute the probability of the node being clustered into the same community with each node in C under randomized smoothing. Then, we compute the average probability and if it is larger than a threshold, we add the node to C. When no more nodes can be added to C, we randomly pick another node from the remaining nodes and repeat the above process until all nodes are clustered into certain communities. We believe it is an interesting future work to explore how to derive communities from the smoothed method. We note that the communities derived from the smoothed community detection method may be less accurate than those derived from the base community detection method. In other words, there may be a tradeoff between accuracy and robustness.

\section{Conclusion}
In this work, we develop the first certified robustness guarantee of  community detection against adversarial structural perturbations. Specifically, our results show that a set of nodes can be  provably detected as in the same community (against splitting attacks) or in different communities (against merging attacks) when the number of edges added or removed by an attacker is no larger than a threshold. Moreover, we show that our derived threshold is tight when randomized smoothing with our discrete noise is used.  Our method can turn any community detection method to be provably robust against adversarial structural perturbation to defend against splitting and merging attacks. We also empirically demonstrate the effectiveness of our method using three real-world graph datasets with ground-truth communities.  
 Interesting future work includes  leveraging the information of the community detection algorithm to further improve the certified robustness guarantees and exploring what certified perturbation size should be expected for a particular application scenario.

\noindent\textbf{ACKNOWLEDGMENTS}\\
We thank the anonymous reviewers for insightful reviews. This work was supported by NSF grant No. 1937787 and No. 1937786. 

\bibliographystyle{ACM-Reference-Format}
\bibliography{refs}

\appendix
\section{A Variant of the Neyman-Pearson Lemma}
\label{proof_of_lemma_neyman_pearson}

We prove a variant of the Neyman-Pearson Lemma~\cite{neyman1933ix} that is used to derive our certified perturbation size.

\begin{restatable}{lemma}{NPL}
\label{neyman-pearson-variant}
Assume $X$ and $Y$ are two random variables in the discrete space $\{0,1\}^{n}$ with probability distributions $\textrm{Pr}(X)$ and $\textrm{Pr}(Y)$, respectively. Let $\psi: \{0,1\}^{n} \rightarrow \{0, 1\}$ be a random or deterministic function. 
Let $T_1 = \{ \mathbf{z} \in \{0, 1\}^n: \frac{\textrm{Pr}(X=\mathbf{z})}{\textrm{Pr}(Y=\mathbf{z})} > t \}$ 
and 
$T_2 = \{ \mathbf{z} \in \{0, 1\}^n: \frac{\textrm{Pr}(X=\mathbf{z})}{\textrm{Pr}(Y=\mathbf{z})} = t \}$ for some $t>0$.
Assume $T_3 \subseteq T_2$ and $T = T_1 \bigcup T_3$. If $\textrm{Pr}(\psi(X)=1) \geq \textrm{Pr}(X \in T)$, then $\textrm{Pr}(\psi(Y)=1) \geq \textrm{Pr}(Y \in T)$.
\end{restatable}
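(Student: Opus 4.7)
The plan is to adapt the classical Neyman--Pearson likelihood-ratio argument to the discrete space $\{0,1\}^n$. First, I would introduce the function $h(\mathbf{z}) := \textrm{Pr}(\psi(\mathbf{z})=1)$ for each $\mathbf{z} \in \{0,1\}^n$; this is an indicator when $\psi$ is deterministic and lies in $[0,1]$ when $\psi$ is random. The key observation is that for any random variable $Z$ on $\{0,1\}^n$ we can decompose $\textrm{Pr}(\psi(Z)=1) = \sum_{\mathbf{z}} \textrm{Pr}(Z=\mathbf{z}) \, h(\mathbf{z})$, and similarly $\textrm{Pr}(Z \in T) = \sum_{\mathbf{z}} \textrm{Pr}(Z=\mathbf{z}) \, \mathbb{I}_T(\mathbf{z})$.

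Next, define the ``excess'' $\Delta(Z) := \textrm{Pr}(\psi(Z)=1) - \textrm{Pr}(Z \in T)$. I would rewrite it as
\[
\Delta(Z) \;=\; \sum_{\mathbf{z} \in T} \textrm{Pr}(Z=\mathbf{z}) \bigl(h(\mathbf{z}) - 1\bigr) + \sum_{\mathbf{z} \notin T} \textrm{Pr}(Z=\mathbf{z}) \, h(\mathbf{z}).
\]
Observe that the bracket $h(\mathbf{z})-1$ is nonpositive on $T$ while $h(\mathbf{z})$ is nonnegative on $T^c$. Now apply the likelihood-ratio structure dictated by the definitions of $T_1,T_2,T_3$: on $T = T_1 \cup T_3$ we have $\textrm{Pr}(X{=}\mathbf{z}) \geq t\,\textrm{Pr}(Y{=}\mathbf{z})$ (strict on $T_1$, equality on $T_3 \subseteq T_2$), and on $T^c = (T_1 \cup T_2)^c \cup (T_2 \setminus T_3)$ we have $\textrm{Pr}(X{=}\mathbf{z}) \leq t\,\textrm{Pr}(Y{=}\mathbf{z})$ (strict off $T_1 \cup T_2$, equality on $T_2 \setminus T_3$).

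Multiplying each summand by its factor and using the sign information above, I would obtain the term-by-term inequality
\[
\textrm{Pr}(X{=}\mathbf{z})\bigl(h(\mathbf{z})-1\bigr) \leq t\,\textrm{Pr}(Y{=}\mathbf{z})\bigl(h(\mathbf{z})-1\bigr) \quad \text{on } T,
\]
\[
\textrm{Pr}(X{=}\mathbf{z})\,h(\mathbf{z}) \leq t\,\textrm{Pr}(Y{=}\mathbf{z})\,h(\mathbf{z}) \quad \text{on } T^c,
\]
and summing yields $\Delta(X) \leq t\,\Delta(Y)$. Since $t>0$ and $\Delta(X) \geq 0$ by hypothesis, we conclude $\Delta(Y) \geq 0$, which is precisely the desired inequality $\textrm{Pr}(\psi(Y){=}1) \geq \textrm{Pr}(Y \in T)$.

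The main subtlety to watch for is the ``tie'' set $T_2$ on which the likelihood ratio equals $t$ exactly: the lemma is stated with an arbitrary $T_3 \subseteq T_2$ so that $T$ may split the tie set however one likes, and I need to make sure my bounds accommodate both $T_3$ (lives in $T$) and $T_2 \setminus T_3$ (lives in $T^c$). Fortunately, on either portion the ratio is exactly $t$, so both inequalities above become equalities on $T_2$, and the argument goes through uniformly. No nontrivial measure-theoretic issues arise because the ambient space $\{0,1\}^n$ is finite, so all sums are finite and reordering is unrestricted.
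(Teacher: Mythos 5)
Your proposal is correct and follows essentially the same route as the paper's proof: the decomposition of $\textrm{Pr}(\psi(Z)=1)-\textrm{Pr}(Z\in T)$ into the $T$-part weighted by $h(\mathbf{z})-1$ and the $T^c$-part weighted by $h(\mathbf{z})$ is exactly the paper's split into $-\psi(0|\mathbf{z})$ and $\psi(1|\mathbf{z})$ terms, and your term-by-term likelihood-ratio bound giving $\Delta(X)\leq t\,\Delta(Y)$ is the paper's inequality $\Delta(Y)\geq \frac{1}{t}\Delta(X)$ rearranged. Your handling of the tie set $T_2$ (split between $T_3$ and $T_2\setminus T_3$, with equality of the ratio on both pieces) matches the paper's treatment as well.
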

\begin{proof}

In general, we assume that $\psi$ is a random function. We use $\psi(0|\mathbf{z})$ to denote the probability that $\psi(\mathbf{z})=1$ and $\psi(0|\mathbf{z})$ to denote the probability that $\psi(\mathbf{z})=0$, respectively. We use $T^c$ to represent the complement of $T$, i.e., $T^c=\{0, 1\}^n\setminus T$.
\begin{align}
& \textrm{Pr}(\psi(Y)=1) - \textrm{Pr}(Y \in T)  \\
& = \sum_{\mathbf{z} \in \{0,1\}^{n}} \psi(1|\mathbf{z}) \textrm{Pr}(Y = \mathbf{z}) - \sum_{\mathbf{z} \in T} \textrm{Pr}(Y = \mathbf{z}) \\
& = \Big( \sum_{\mathbf{z} \in T^{c}} \psi(1|\mathbf{z}) \textrm{Pr}(Y = \mathbf{z}) + \sum_{\mathbf{z} \in T} \psi(1|\mathbf{z}) \textrm{Pr}(Y = \mathbf{z}) \Big)  \\ 
&   \qquad - \Big( \sum_{\mathbf{z} \in T} \psi(1|\mathbf{z}) \textrm{Pr}(Y = \mathbf{z}) + \sum_{\mathbf{z} \in T} \psi(0|\mathbf{z}) \textrm{Pr}(Y = \mathbf{z}) \Big) \\
\label{neyman_pearson_prove_inequality_1}
& = \sum_{\mathbf{z} \in T^{c}} \psi(1|\mathbf{z}) \textrm{Pr}(Y = \mathbf{z}) - \sum_{\mathbf{z} \in T} \psi(0|\mathbf{z}) \textrm{Pr}(Y = \mathbf{z}) \\
\label{neyman_pearson_prove_inequality_2}
& \geq \frac{1}{t} \Big( \sum_{\mathbf{z} \in T^{c}} \psi(1|\mathbf{z}) \textrm{Pr}(X = \mathbf{z}) - \sum_{\mathbf{z} \in T} \psi(0|\mathbf{z}) \textrm{Pr}(X = \mathbf{z}) \Big) \\
& = \frac{1}{t} \Bigg( \Big( \sum_{\mathbf{z} \in T^{c}} \psi(1|\mathbf{z}) \textrm{Pr}(X = \mathbf{z}) + \sum_{\mathbf{z} \in T} \psi(1|\mathbf{z}) \textrm{Pr}(X = \mathbf{z}) \Big)  \\ 
& \qquad - \Big( \sum_{\mathbf{z} \in T} \psi(1|\mathbf{z}) \textrm{Pr}(X = \mathbf{z}) + \sum_{\mathbf{z} \in T} \psi(0|\mathbf{z}) \textrm{Pr}(X = \mathbf{z}) \Big) \Bigg) \\ 
& = \frac{1}{t} \Big( \sum_{\mathbf{z} \in \{0,1\}^{n}} \psi(1|\mathbf{z}) \textrm{Pr}(X = \mathbf{z}) - \sum_{\mathbf{z} \in T} \textrm{Pr}(X = \mathbf{z}) \Big) \\
\label{neyman_pearson_prove_inequality_3}
& = \frac{1}{t} \Big( \textrm{Pr}(\psi(X)=1) - \textrm{Pr}(X \in T) \Big) \\
\label{neyman_pearson_prove_inequality_4}
& \geq 0.
\end{align}
We obtain Equation~\ref{neyman_pearson_prove_inequality_2} from~\ref{neyman_pearson_prove_inequality_1} because for any $\mathbf{z} \in T$, we have $\frac{\textrm{Pr}(X=\mathbf{z})}{\textrm{Pr}(Y=\mathbf{z})} \geq t$, and for any $\mathbf{z} \in T^c$,  we have $\frac{\textrm{Pr}(X=\mathbf{z})}{\textrm{Pr}(Y=\mathbf{z})} \leq t$. We have Equation~\ref{neyman_pearson_prove_inequality_4} from~\ref{neyman_pearson_prove_inequality_3} because $\textrm{Pr}(\psi(X)=1) \geq \textrm{Pr}(X \in T)$. 
\end{proof}

\section{Proof of Theorem~\ref{CertifiedPerturbationSize}}
\label{proof_of_theorem_certifiedperturbationsize}
We define the following two random variables: 
\begin{align}
X &= \mathbf{x} \oplus \bm{\epsilon} \\
Y &= \mathbf{x} \oplus \bm{\delta} \oplus \bm{\epsilon}, 
\end{align}
where the random variables $X$ and $Y$ denote random samples by adding discrete noise to the graph-structure binary vector $\mathbf{x}$ and its perturbed version $\mathbf{x}+\bm{\delta}$, respectively. 
Our goal is to find the maximum perturbation size $||\bm{\delta}||_0$ such that the following holds:  
\begin{align}
    \text{Pr}(f(Y)=y)> 0.5
\end{align}
First, we define region $\mathcal{Q}$ such that $\text{Pr}(X \in \mathcal{Q}) = \underline{p}$. Specifically, we gradually add the regions $\mathcal{H}_1, \mathcal{H}_2, \cdots, \mathcal{H}_{2n+1}$ to $\mathcal{Q}$ until  $\text{Pr}(X \in \mathcal{Q}) = \underline{p}$. In particular, we define $\mu$ as follows:
\begin{align}
\mu = \argmin_{\mu^{\prime} \in \{1, 2, \cdots, 2n+1 \}} \mu^{\prime}, \ s.t.\ \sum_{i=1}^{\mu^{\prime}} \textrm{Pr}(X \in \mathcal{H}_i) \geq \underline{p}
\end{align}
Moreover, we define $\underline{\mathcal{H}_{\mu}}$ as any subregion of $\mathcal{H}_{\mu}$ such that: 
\begin{align}
\text{Pr}(X\in \underline{\mathcal{H}_{\mu}})= \underline{p} - \sum_{i=1}^{\mu-1} \textrm{Pr}(X \in \mathcal{H}_i) 
\end{align}
Then, we define the region $\mathcal{Q}$ as follows:
\begin{align}
  \mathcal{Q} = \bigcup_{i=1}^{\mu-1} \mathcal{H}_i \cup \underline{\mathcal{H}_{\mu}}
\end{align} 
Based on the conditions in Equation~\ref{main_theorem_condition_label}, we have:
\begin{align}
  \text{Pr}(f(X)=y)\geq \underline{p}= \text{Pr}(X\in\mathcal{Q})
\end{align}
We define a function $\psi(\mathbf{z})=\mathbb{I}(f(\mathbf{z})=y)$. Then, we have: 
\begin{align}
\text{Pr}(\psi(X)=y) = \text{Pr}(f(X)=y)\geq \text{Pr}(X\in\mathcal{Q}).
\end{align}
Moreover,  we have $\frac{\text{Pr}(X=\mathbf{z})}{\text{Pr}(Y=\mathbf{z})} > h_{\mu}$ if and only if $\mathbf{z} \in \bigcup_{j=1}^{\mu-1} \mathcal{H}_j$, 
 and $\frac{\text{Pr}(X=\mathbf{z})}{\text{Pr}(Y=\mathbf{z})} = h_{\mu}$ for any $\mathbf{z} \in \underline{\mathcal{H}_{\mu}}$, where $h_{\mu}$ is the probability density ratio in the region $\mathcal{H}_{\mu}$. Note that $\mathcal{Q} = \bigcup_{i=1}^{\mu-1} \mathcal{H}_i \cup \underline{\mathcal{H}_{\mu}}$. Therefore, according to Lemma~\ref{neyman-pearson-variant}, we have:
\begin{align}
  \text{Pr}(f(Y)=y) \geq \text{Pr}(Y\in \mathcal{Q}). 
\end{align}
Therefore, to reach our goal, it is sufficient to have: 
\begin{align}
 & \text{Pr}(Y\in \mathcal{Q})>0.5  \Longleftrightarrow\\
 & \text{Pr}(Y\in \mathcal{Q}) \\
 = &\text{Pr}(Y \in \bigcup_{i=1}^{\mu-1} \mathcal{H}_i \cup \underline{\mathcal{H}_{\mu}})  \\
 =& \sum_{i=1}^{\mu-1} \text{Pr} (Y \in \mathcal{H}_i) + (\underline{p} - \sum_{i=1}^{\mu-1} \text{Pr} (X \in \mathcal{H}_i)) /  h_{\mu} \\
>& 0.5, 
\end{align}
where $h_{\mu}$ is the probability density ratio in the region $\mathcal{H}_{\mu}$. Our certified perturbation size $L$ is the largest perturbation size $||\bm{\delta}||_0$ that makes the above inequality hold. 

\section{Proof of Theorem~\ref{tighnessbound}}
\label{proof_of_tight_of_certified_perturbation_size}
Our idea is to construct a base function $f^*$ consistent with the conditions in Equation~\ref{main_theorem_condition_label}, but the smoothed function is not guaranteed to predict $y$. 
Let region $\mathcal{Q}$ be defined as in the proof of Theorem~\ref{CertifiedPerturbationSize} We denote $\mathcal{Q}^c$ as the complement of $\mathcal{Q}$, i.e., $\mathcal{Q}^c=\{0,1\}^n \setminus \mathcal{Q}$. 
Given region $\mathcal{Q}$, we construct the following function:
\begin{align}
    f^{*}(\mathbf{z})=
    \begin{cases}
    y  & \text{if }\mathbf{z}\in\mathcal{Q} \\
    1-y & \text{if }\mathbf{z}\in\mathcal{Q}^c
    \end{cases}
\end{align}
By construction, we have $\text{Pr}(f^{*}(X)=y)=\underline{p}$ and  $\text{Pr}(f^{*}(X)=1-y)=1-\underline{p}$, which are consistent with Equation~\ref{main_theorem_condition_label}. 
From our proof of Theorem~\ref{CertifiedPerturbationSize}, we know that when $||\bm{\delta}||_0 > L$, we have:
\begin{align}
    \text{Pr}(Y\in\mathcal{Q}) \leq 0.5,
\end{align}
Moreover, based on the definition of $f^{*}$, we have:
\begin{align}
\text{Pr}(f^{*}(Y)=y) \leq 
\text{Pr}(f^{*}(Y)=1-y)
\end{align}
Therefore, we have either $g(\mathbf{x} \oplus \bm{\delta})\neq y$ or there exists ties when $||\bm{\delta}||_0 > L$.

\section{Computing $\text{Pr}(\mathbf{x}\oplus\epsilon \in \mathcal{H}(e))$ and $\text{Pr}(\mathbf{x} \oplus\delta \oplus \epsilon \in \mathcal{H}(e))$}
\label{compute_algorithm_part}
We first define the following regions:
\begin{align}
\mathcal{H}(a,b) & = \{\mathbf{z} \in \{0, 1\}^n: ||\mathbf{z}-\mathbf{x}||_0=a \text{ and } \\ 
& ||\mathbf{z}-\mathbf{x} \oplus \bm{\delta}||_0 =  b \},
\end{align}
for $a,b \in \{0, 1, \cdots, n \}$. Intuitively, $\mathcal{H}(a,b)$ includes the binary vectors that are $a$ bits different from $\mathbf{x}$ and $b$ bits different from 
$\mathbf{x} \oplus \bm{\delta}$.  
 Next, we compute the size of the region $\mathcal{H}(a,b)$ when $||\bm{\delta}||_0=l$. Without loss of generality, we assume $\mathbf{x} = [0, 0, \cdots, 0]$ as a zero vector and $\mathbf{x} \oplus \bm{\delta} = [1, 1, \cdots, 1, 0, 0, \cdots, 0]$, where the first $l$ entries are 1 and the remaining $n - l$ entries are 0. We construct a binary vector $\mathbf{z}\in \mathcal{H}(a,b)$. Specifically,  suppose we flip $i$ zeros in the last $n-l$ zeros in both $\mathbf{x}$ and $\mathbf{x} \oplus \bm{\delta}$. Then, we flip $a-i$ of the first $l$ bits of  $\mathbf{x}$ and flip the rest $l-a+i$ bits of the first $l$ bits of  $\mathbf{x} \oplus \bm{\delta}$. In order to have $\mathbf{z}\in \mathcal{H}(a,b)$, we need  $l-a+i + i = b$, i.e., $i=(a+b-l)/2$. Therefore, we have the size  $|\mathcal{H}(a,b)|$ of the region $\mathcal{H}(a,b)$ as follows:
\begin{align}
  |\mathcal{H}(a,b)|  = 
    \begin{cases}
      0,  &\textrm{ if } (a+b-l) \textrm{ mod } 2 \neq 0, \\
      0,  &\textrm{ if } a+b<l, \\
  {n-l \choose \frac{a+b-l}{2}} {l \choose \frac{a-b+l}{2}}, &\textrm{ otherwise } 
  \end{cases}
\end{align}
Moreover, for each $\mathbf{z} \in \mathcal{H}(a,b)$, we have $\text{Pr}(\mathbf{x} \oplus \bm{\epsilon} = \mathbf{z}) = \beta^{n-a} (1-\beta)^{a}$  and $\text{Pr}(\mathbf{x} \oplus \bm{\delta} \oplus \bm{\epsilon} = \mathbf{z}) = \beta^{n-b} (1-\beta)^b$. Therefore, we have:
\begin{align}
  \textrm{Pr}(\mathbf{x} \oplus \bm{\epsilon} \in \mathcal{H}(a,b)) 
  & = \beta^{n-a} (1-\beta)^a \cdot |\mathcal{H}(a,b)| \\ 
  \textrm{Pr}(\mathbf{x} \oplus \bm{\delta} \oplus \bm{\epsilon} \in \mathcal{H}(a,b)) 
  & = \beta^{n-b} (1-\beta)^b \cdot |\mathcal{H}(a,b)| 
\end{align}

Note that $\mathcal{H}(e)=\cup_{b-a=e} \mathcal{H}(a, b)$. 
 Therefore, we have: 
\begin{align}
  & \text{Pr}(\mathbf{x} \oplus \bm{\epsilon} \in \mathcal{H}(e)) \\
  =& \text{Pr}(\mathbf{x} \oplus \bm{\epsilon} \in \cup_{b-a=e} \mathcal{H}(a, b)) \\
  =  & \text{Pr}(\mathbf{x} \oplus \bm{\epsilon} \in \cup_{i=\max(0,e)}^{\min(n,n+e)} \mathcal{H}(i-e,i))  \\ 
  =  & \sum_{i=\max(0,e)}^{\min(n,n+e)} \text{Pr}(\mathbf{x} \oplus \bm{\epsilon} \in \mathcal{H}(i-e,i))  \\ 
  =  & \sum_{i=\max(0,e)}^{\min(n,n+e)} \beta^{n-(i-e)} (1-\beta)^{i-e} \cdot |\mathcal{H}(i-e,i)| \label{probX} \\
    =  & \sum_{i=\max(0,e)}^{\min(n,n+e)} \beta^{n-(i-e)} (1-\beta)^{i-e} \cdot \theta(e,i),
\end{align}

where $\theta(e,i)=|\mathcal{H}(i-e,i)|$.

Similarly, we have
\begin{align}
& \text{Pr}(\mathbf{x} \oplus \bm{\delta} \oplus \bm{\epsilon} \in \mathcal{H}(e)) \\
 =& \sum_{i=\max(0,e)}^{\min(n,n+e)} \beta^{n-i} (1-\beta)^{i} \cdot \theta(e,i).
\end{align}

\section{Proof of Proposition~\ref{probability_guarantees}}
\label{proof_of_probability_guarantee}

Based on the Clopper-Pearson  method, we know the probability that the following inequality holds is at least $1-\alpha$ over the randomness in sampling the noise: 
\begin{align}
  \text{Pr}(f(\mathbf{x}+\bm{\epsilon})=\hat{y})  \ge \underline{p}
\end{align}
Therefore, with the estimated probability lower bound $\underline{p}$, we can invoke Theorem \ref{CertifiedPerturbationSize} to obtain the robustness guarantee if $\underline{p}> 0.5$. Note that otherwise our algorithm \textsc{Certify}  abstains.

\end{document}